\newtheorem*{theorem*}{Theorem}
\newtheorem{theorem}{Theorem}[section]
\newtheorem{lemma}[theorem]{Lemma}
\newtheorem*{proposition*}{Proposition}
\newtheorem{claim}[theorem]{Claim}
\newtheorem{corollary}[theorem]{Corollary}
\newtheorem{definition}[theorem]{Definition}
\newtheorem{remark}[theorem]{Remark}
\newtheorem{conjecture}[theorem]{Conjecture}
\newcommand{\ignore}[1]{}
\newcommand{\enote}[1]{} \newcommand{\knote}[1]{}
\newcommand{\rnote}[1]{}
\DeclareMathOperator*{\argmax}{arg\,max}
\newcommand{\cS}{{\mathcal{S}}}
\newcommand{\cA}{{\mathcal{A}}}
\newcommand{\cF}{{\mathcal{F}}}
\newcommand{\Vl}{V_{\rm \tiny loop}}
\newcommand{\tsigma}{\widetilde{\sigma}}
\newcommand{\hsigma}{\widehat{\sigma}}
\newcommand{\E}[1]{{\mathbb{E}}\left[{#1}\right]}
\renewcommand{\P}[1]{{\mathbb{P}}\left[{#1}\right]}
\newcommand{\CondP}[2]{{\mathbb{P}}\left[{#1}\middle\vert{#2}\right]}
\newcommand{\ZombP}[2]{{\mathbb{Q}}\left[{#1}\middle\vert\middle\vert{#2}\right]}
\newcommand{\Ind}[1]{\mathbf{1}\left(#1\right)}
\def\ux{\underline{x}}
\def\uy{\underline{y}}
\def\traj{\lambda}
\newcommand{\Dv}{\rho_V}
\newcommand{\De}{\rho_E}
\newcommand{\eps}{\epsilon}
\newcommand{\N}{\mathbb N} \newcommand{\R}{\mathbb R}
\newcommand{\cX}{\mathcal{X}}
\newcommand{\ind}{{\bf 1}}
\newcommand{\sgn}{\mathrm{sgn}}
\newcommand{\half}{{\textstyle \frac12}}
\renewcommand{\phi}{\varphi}
\begin{document}
\title{Efficient Bayesian Social Learning on Trees}

\author{Yashodhan Kanoria\footnote{Microsoft Research New England and
    Department of Electrical Engineering, Stanford University. Supported
    by 3Com Corporation Stanford Graduate Fellowship.}~ and
  Omer Tamuz\footnote{Microsoft Research New England and Weizmann
    Institute. Supported by ISF grant 1300/08} }

\date{\today}
\maketitle
\begin{abstract}
  We consider a set of agents who are attempting to iteratively
  learn the `state of the world' from their neighbors in a social
  network. Each agent initially receives a noisy observation of
  the true state of the world. The agents then repeatedly `vote'
  and observe the votes of some of their peers, from which they
  gain more information. The agents' calculations are Bayesian
  and aim to myopically maximize the expected utility at each
  iteration.

  This model, introduced by Gale and Kariv (2003), is a natural
  approach to learning on networks. However, it has been
  criticized, chiefly because the agents' decision rule appears
  to become computationally intractable as the number of
  iterations advances. For instance, a dynamic programming
  approach (part of this work) has running time that is
  exponentially large in $\min(n, (d-1)^t)$, where $n$ is the
  number of agents.

  We provide a new algorithm to perform the agents' computations
  on locally tree-like graphs. Our algorithm uses the dynamic
  cavity method to drastically reduce computational effort. Let
  $d$ be the maximum degree and $t$ be the iteration number. The
  computational effort needed per agent is exponential only in
  $O(td)$ (note that the number of possible information sets of a
  neighbor at time $t$ is itself exponential in $td$).

  Under appropriate assumptions on the rate of convergence, we
  deduce that each agent is only required to spend
  polylogarithmic (in $1/\eps$) computational effort to
  approximately learn the true state of the world with error
  probability $\eps$, on regular trees of degree at least
  five. We provide numerical and other evidence to justify our
  assumption on convergence rate.

  We extend our results in various directions, including loopy
  graphs. Our results indicate efficiency of iterative Bayesian
  social learning in a wide range of situations, contrary to widely held
  beliefs.
\end{abstract}

\section{Introduction}

Consider a group of Facebook users who are each faced with the dilemma
of whether to place an order for the new {\em iGadget} or the new {\em
  Gagdetoid}. Each boldly ventures to the wild and does independent
research on the subject matter, discovering the ``correct'' answer
with some probability $p>1/2$. Then, over the next few weeks, before
making the final decision, they daily share their current opinion on
the matter with their Facebook contacts by posting either {\em
  iGadget} or {\em Gadgetoid} on their status line. Every day, after
learning their friends' opinions, they update their own by performing
the Bayesian calculation that determines which of the two options is
more likely to be true, given all they know. Eventually, they make a
purchase based on this information. Such dynamics have become an
integral part of {\em electronic commerce}, and understanding them is
valuable to social media advertisers and vendors.

This model (or rather, a slightly more general version of it) was
introduced by Gale and Kariv~\cite{GaleKariv:03}. It is one in a long
succession of social learning models. Already in 1785
Condorcet~\cite{Condorcet:85} considered how a group of individuals
with weak private signals could reach a correct collective decision;
he showed that a majority vote is likely to be correct when the group
is large enough.  Models such as those of Banerjee~\cite{Banerjee:92},
Bikhchandani, Hirshleifer and Welch~\cite{BichHirshWelch:92} and Smith
and Sorensen~\cite{SmithSorensen:00} allow for each individual to make
a single decision, learning from the decisions of her
predecessors. The models of DeGroot~\cite{DeGroot:74} and Bala and
Goyal~\cite{BalaGoyal:98} consider social networks and repeated
interactions between agents.

The model of Gale and Kariv combines features from all of the above. It
describes a group of individuals (or agents), each with a private
signal that carries information on an unknown state of the world. The
individuals form a social network, so that each observes the actions
of some subset - her neighbors. The agents must choose between a set
of possible actions, the relative merit of which depends on the state
of the world. The agents iteratively learn by observing their
neighbors' actions, and picking an action that is myopically optimal,
given the information known to them.

Even in the simple case of two states of the world, two possible
private signals and two actions, the required calculations appear to
be very complicated. This has indeed been a recurring criticism of
this model (see, e.g., \cite{Jackson:08, Goyal:08}). One approach to
this difficulty is the bounded rationality approach of Bala and
Goyal~\cite{BalaGoyal:98}, where agents ignore
part
of the information available to them and perform a Bayesian
calculation on the rest.

While the bounded rationality approach has led to impressive results,
it has two disadvantages, as compared with a fully Bayesian one:
first, it is bound to involve a somewhat arbitrary decision of which
heuristics the agents use. Second, a game theoretic analysis of
strategic players is possible only if the players choose actions that are
optimal by some criterion. Hence game-theoretic analyses of learning
on networks (e.g. \cite{VieRosSol:06}) often opt for the more
difficult but fully Bayesian model.

A different approach to the difficulty of computation in the Bayesian
model is to show that the calculations are in fact not as difficult as
they appear, at least in some cases. In this paper we show that when
the graph of social ties is locally a tree, or close to one, then the
computational outlook is not a bleak as previously thought.

We first give a simple dynamic programming algorithm for the Gale and
Kariv model that is exponential in the number of individuals. Since at
iteration $t$ one may consider only agents at distance $t$, then in
graphs of maximum degree $d$ (on which we focus) the number of
individuals to consider is $O((d-1)^t)$, and the time required of each
individual to compute their action (or vote) at time $t$ is
$2^{O((d-1)^t)}$. We then develop a sophisticated dynamic program for
locally tree-like graphs that reduces the computational effort to
$2^{O(td)}$.

We conjecture, and show supporting numerical evidence, that on
infinite trees of degree at least five, the number of iterations
needed to calculate the correct answer with probability $1-\eps$ is
$O(\log\log (1/\eps))$. In fact, we rigorously establish this for the
`majority dynamics' update rule, in which agents adopt the opinion of
their neighbors in the previous round. Thus, our conjecture follows if
iterative Bayesian learning learns at least as fast as majority, as
suggested by intuition and numerical evidence, which we present.
Assuming this conjecture, the computational effort required drops from
quasi-polynomial in $1/\eps$ (using the naive dynamic program) to
polylogarithmic in $(1/\eps)$.

An additional difficulty of the Gale and Kariv model is that it
requires the individuals to exactly know the structure of the graph. A
possible solution to this is a modification that allows the agents to
know only their own neighborhoods and the distribution from which the
rest of the graph was picked. We pursue this for the natural
configuration model of random graphs (see below for full explanation)
and show that the same computational upper bounds apply here.

We also introduce two further features into the model and show how to
deal with them algorithmically. First, there may be a finite number of
`hub' nodes who are each observed by many nodes leading to several
short loops in the connectivity graph. We show that our algorithm can
be suitably modified for this case. Second, we consider that nodes may
not all be `active' in each round, and that nodes may observe only a
random subset of active neighbors.  We show that this can be handled
when `inactive' edges/nodes occur independently of each other and in
time.

The key technique used in this paper is the dynamic cavity method,
introduced by Kanoria and Montanari~\cite{KanMon:09} in their study of
``recursive majority'' updates on trees, which was also motivated by
social learning. A dynamical version of the cavity method of
Statistical Physics, this technique was used to analyze majority
dynamics on trees, and appears promising for the analysis of iterative
tree processes in general.  In this work, we use this technique for
the first time to give an algorithm for efficient computation by
nodes. This is in contrast to the case of majority updates, where the
update rule is computationally trivial.
Our algorithmic approach
leveraging the dynamic cavity method may be applicable to a range of
iterative update situations on locally treelike graphs.

\section{Model}
\begin{itemize}
\item There is a true state of the world $s \in \cS$, where $\cS$ is
  finite. The prior distribution $\P{s}$ is common knowledge.
\item Let $G=(V,E)$ be an undirected connected graph of agents and
  their social ties. Let $n\equiv |V|$.
\item Denote by $\partial i$ the neighbors of agent $i$, not including
  $i$.
\item Each agent $i$ receives a private signal $x_i \in
  \cX$, where $\cX$ is finite. Private signals are independent conditioned
  on $s$. The distribution $\CondP{x_i}{s}$ is common knowledge. We
  assume that the signal is informative, so that $\CondP{x_i}{s}$ is
  different for different values of $s$.
\item We identify the set of actions available to agents
  with the set $\cS$ of the states of the world (thus we call
  the actions `votes'). For each state of the
  world $s$, action $\sigma$ has utility one when the state of the
  world is $s=\sigma$, and zero otherwise. Thus the action that
  maximizes the expected utility corresponds to the maximum \emph{a
    posteriori} probability (MAP) estimator of the state of the world.
\item At each time period $t \in \{0, 1, 2, \ldots\}$ each agent takes an action and
  then observes the actions take by her neighbors.
\item Denote by $\mathcal{F}_i^t$ the information available to agent
  $i$ at time $t$. We do {\bf not} include in this her neighbors'
  votes at time $t$.
\item At each time period, the agents' goal is to maximize their
  expected utility. They are myopic and fully Bayesian, and so at time
  $t$ agent $i$ takes action $\argmax_{s \in
    \cS}\CondP{s}{\mathcal{F}_i^t}$, using some tie breaking rule if
  necessary. This tie breaking rule is also common knowledge.
\item $\sigma_i(t)$ denotes agent $i$'s action at time
  $t$. $\sigma_i^t=\{\sigma_i(t')|t' \leq t\}$ denotes all of
  agent $i$'s actions, up to and including time $t$. Then
  $\mathcal{F}_i^t$ includes $x_i$, $\{\sigma_j^{t-1}|j
  \in \partial i\}$ and $\sigma_i^{t-1}$ (which is actually
  a function of the first two in case of a deterministic tie breaking
  rule, see below).
\item We refer to $\sigma_i$ as $i$'s {\em trajectory}.
\item Denote $\sigma_{\partial i}^t=\{\sigma_j^t|j \in \partial
  i\}$. We assume a deterministic tie-breaking rule, so that
   $\sigma_i^t$ is a deterministic function of $x_i$ and
  $\sigma_{\partial i}^{t-1}$.
  To differentiate the random variable $\sigma_i^t$ from the function
  used to calculate it, we denote the function by $g_i^t$:
  \begin{equation*}
    \sigma_i^t=g_i^t(x_i,\sigma_{\partial i}^{t-1}).
  \end{equation*}
  For convenience, we also define the scalar functions $g_{i,t}(x_i,\sigma_{\partial i}^{t-1})$ corresponding to $\sigma_i(t)$, so that $g_i^t = (g_{i,0}, g_{i,1}, \ldots, g_{i,t})$.
\end{itemize}

\section{A Simple Algorithm}
\label{sec:simple_dp}

A sign of the complexity of this Bayesian calculation is that even the
brute-force solution for it is not trivial. We therefore describe it
here.

One way of thinking of the agents' calculation is to imagine that they
keep a long list of all the possible combinations of initial signals
of all the other agents, and at each iteration cross out entries that
are inconsistent with the signals that they've observed from their
neighbors up to that point. Then, they calculate the probabilities of
the different possible states of the world by summing over the entries
that have yet to be crossed out.

This may not be as simple as it seems. To understand which initial
configurations are ruled out by a signal coming from a neighbor, an
agent must ``simulate'' that neighbor's behavior, and so each agent
must calculate the function $g_i^t$ for every other agent $i$ and
every possible set of observations by $i$. We formalize this below.

Let $\ux \in \cX^n$ be the vector
  of private signals $(x_i)_{i \in V}$. The trajectory of $i$, $\sigma_i$, is a deterministic function of
$\ux$. Assume then that up to time $t-1$ each agent has calculated the
trajectory $\sigma_i^{t-1}(\ux)$ for all possible private signal vectors
$\ux$ and all agents $i$. This is trivial for $t-1=0$.

We say that $\uy$ is feasible for $i$ at time $t$ if $x_i=y_i$ and
$\sigma_{\partial i}^t=\sigma_{\partial i}^t(\uy)$.  We denote this
set of feasible private signal vectors $I_i^t(x_i,\sigma_{\partial
  i}^t)$.
  To calculate $\sigma_i^t(\ux)$, one need only note that
\begin{align*}
  \P{s|\cF_i^t}
  &\propto \P{s} \P{x_i, \sigma_{\partial i}^{t-1}|s}\\
  &=\P{s} \sum_{\uy \in I_i^{t-1}\left(y_i,\sigma_{\partial i}^{t-1}\right)}\CondP{\uy}{s}
  \label{eq:posterior_sow}
\end{align*}
and
\begin{align*}
  g_{i,t}(x_i, \sigma_{\partial i}^{t-1}) = \argmax_{s \in \cS} \P{s|\cF_i^t}
\end{align*}
by definition.  We use the standard abusive notation $\P{x_i}$ instead
of $\P{x_i=y_i}$, $\P{\sigma_j^t}$ instead of
$\P{\sigma_j^t=\omega_j^t}$, etc.

It is easy to verify that using this the calculation of each
$\sigma_i^t(\ux)$ takes $O(tn|\cX|^n)$. One can do better than
perform each of these separately, but in any case the result is
exponential in $n$, so we derive a rough upper bound of $2^{O(n)}$ for
this method. Since we are in particular interested in graphs of
maximum degree $d$, we note that up to time $t$ an agent need only
perform this for agents at distance at most $t$, and so this bound
becomes $2^{O((d-1)^t)}$ for large graphs, i.e., graphs for which
$n>(d-1)^t$ for relevant values of $t$.

\section{The Dynamic Cavity Algorithm on Trees}
Assume in this section that the graph $G$ is a tree with finite degree nodes.
For $j \in \partial i$ let $G_{j \to
  i}=(V_{j \to i},E_{j \to i})$ denote $j$'s connected component in
the graph $G$ with the edge $(i,j)$ removed. That is, $V_{j \to i}$ is
$j$'s subtree when $G$ is rooted in $i$.

\subsection{The Dynamic Cavity Method}
We consider a modified process where agent $i$ is replaced by a {\em
  zombie} which takes fixed actions
$\tau_i=(\tau_i(0),\tau_i(1),\ldots)$, and {\em the true state of the
  world is assume to be some fixed $s$}. Furthermore, this `fixing'
goes unnoticed by the agents (except $i$, who is a zombie anyway) who carry on their calculations,
assuming $i$ is her regular Bayesian self, and the state of the world
is drawn randomly according to $\P{s}$. We denote by
$\ZombP{A}{\tau_i,s}$ the probability of event $A$ in this modified
process. This modified process is easier to analyze, as the processes
on each of the subtrees $V_{j \to i}$ are independent. This is
formalized in the following claim, without proof:
\begin{claim}
  \begin{equation}
    \label{eq:q_independent_sw}
    \ZombP{\sigma_{\partial i}^t}{\tau_i,s}
    =\prod_{j \in \partial i}\ZombP{\sigma_j^t}{\tau_i^t,s}.
  \end{equation}
\end{claim}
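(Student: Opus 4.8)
The plan is to show that, under the zombie law $\ZombP{\cdot}{\tau_i,s}$, each neighbor's trajectory $\sigma_j^t$ is a deterministic function of only the private signals inside $j$'s subtree $V_{j \to i}$ together with the frozen zombie trajectory, and then to invoke the conditional independence of these disjoint signal blocks given $s$. The factorization in \eqref{eq:q_independent_sw} is then a statement that the $\{\sigma_j^t\}_{j\in\partial i}$ are conditionally independent, with the coupling between distinct subtrees---which in the genuine process runs through agent $i$---having been severed by freezing $i$'s actions.

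First I would argue that the dynamics restricted to $G_{j\to i}$ are self-contained once $i$'s actions are treated as an external boundary condition. The key structural fact is that $G$ is a tree, so deleting $i$ partitions $V \setminus \{i\}$ into the disjoint pieces $V_{j\to i}$, and for any $k \in V_{j\to i}$ one has $\partial k \subseteq V_{j\to i} \cup \{i\}$, with $i \in \partial k$ if and only if $k=j$. I would make the dependence precise by induction on the time index. At time $0$, $\sigma_k(0)=g_{k,0}(x_k)$ depends on $x_k$ alone. For the inductive step, $\sigma_k^t = g_k^t(x_k, \sigma_{\partial k}^{t-1})$, where each neighbor in $\partial k$ lies in $V_{j\to i}$ except possibly $i$ itself (only when $k=j$); substituting the inductive hypothesis and the frozen value $\tau_i$ for $i$'s votes shows $\sigma_k^t$ is a deterministic function of $(x_\ell)_{\ell \in V_{j\to i}}$ and $\tau_i^{t-1}$. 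Although each $g_k^t$ is computed with global knowledge of $G$, it is merely being evaluated here, so this poses no difficulty: the only randomness enters through the private signals. In particular $\sigma_j^t$ depends on $\tau_i$ only through $\tau_i^{t-1} \subseteq \tau_i^t$, which accounts for the truncation appearing on the right-hand side.

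Next I would invoke the distributional assumptions. Under $\ZombP{\cdot}{\tau_i,s}$ the state of the world is fixed to $s$ and the signals $(x_k)_{k\in V}$ are independent, each drawn from $\CondP{x_k}{s}$, while $\tau_i$ is deterministic. Since the blocks $(x_k)_{k\in V_{j\to i}}$ for distinct $j\in\partial i$ are disjoint and hence conditionally independent given $s$, and each $\sigma_j^t$ is a deterministic function of its own block together with the common deterministic input $\tau_i$, the family $\{\sigma_j^t\}_{j\in\partial i}$ consists of conditionally independent random variables. The product formula \eqref{eq:q_independent_sw} follows at once, each factor depending on $\tau_i$ only through $\tau_i^t$.

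I expect the main obstacle to be the bookkeeping in the inductive step: one must verify carefully that no agent strictly inside a subtree ever observes an agent outside it, so that the sole channel coupling the subtrees is agent $i$ itself. This is exactly where the tree structure and the zombie construction do their work---were $i$ a genuine Bayesian agent, its action at each time would aggregate information from \emph{all} of the subtrees, reintroducing dependence among the $\sigma_j^t$ and destroying the factorization.
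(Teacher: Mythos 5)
Your proof is correct, and it is essentially the argument the paper intends: the paper states this claim without proof, but your reasoning (each $\sigma_j^t$ is a deterministic function of the disjoint signal block $(x_k)_{k\in V_{j\to i}}$ plus the frozen $\tau_i$, and these blocks are conditionally independent given $s$) is exactly the machinery the paper makes explicit later via the functions $F^t_{j\to i}$ in the proof of Theorem~\ref{thm:cavity_recursion}. Your observation that $\sigma_j^t$ in fact depends only on $\tau_i^{t-1}$ is slightly sharper than the paper's parenthetical remark that $\tau_i^t$ suffices, and is consistent with it.
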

(Since $\sigma_j^t$ is unaffected by $\tau_i(t')$ for all $t'>t$,
we only need to specify $\tau_i^t$, and not the entire $\tau_i$.)

Now, it might so happen that for some number of steps the zombie behaves
exactly as may be expected of a rational player. More precisely, given
$\sigma_{\partial i}^{t-1}$, it may be the case that  $\tau_i^t=g_i^t\left(x_i,\sigma_{\partial
    i}^{t-1}\right)$. This event provides the connection between the
modified process and the original process, and is the inspiration for
the following theorem.

\begin{theorem}
  \label{thm:p_eq_q}
  For all $i$, $t$ and $\tau_i$
  \begin{align}
    &\CondP{\sigma_{\partial i}^{t-1}}{s,x_i}
    \Ind{\tau_i^t=g_i^t\left(x_i,\sigma_{\partial i}^{t-1}\right)}
    = \nonumber \\
    &\phantom{\CondP{a}{b}}\ZombP{\sigma_{\partial i}^{t-1}}{\tau_i,s}
    \Ind{\tau_i^t=g_i^t\left(x_i,\sigma_{\partial i}^{t-1}\right)}\,.
    \label{eq:p_eq_q}
  \end{align}
\end{theorem}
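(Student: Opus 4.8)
The plan is to prove the identity pointwise in $(\tau_i, x_i, \sigma_{\partial i}^{t-1})$. Note first that the indicator $\Ind{\tau_i^t=g_i^t(x_i,\sigma_{\partial i}^{t-1})}$ appears identically on both sides and depends only on $(\tau_i, x_i, \sigma_{\partial i}^{t-1})$, not on the signals of any other agent. Hence there are only two cases. If the indicator is zero both sides vanish and there is nothing to prove, so the whole content is the case where it is one, i.e.\ $\tau_i^t = g_i^t(x_i, \sigma_{\partial i}^{t-1})$; there it remains to show $\CondP{\sigma_{\partial i}^{t-1}}{s,x_i} = \ZombP{\sigma_{\partial i}^{t-1}}{\tau_i,s}$. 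Throughout I fix a target value for $\sigma_{\partial i}^{t-1}$, call it $\omega$, with neighbour-components $\omega_j$ ($j \in \partial i$) and time-truncations $\omega^{t'}$.

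First I would write each side as a sum over the private signals of the non-root agents. Grouping them by subtree as $\ux_{-i} = (\ux_{V_{j\to i}})_{j \in \partial i}$, both the conditional law $\CondP{\cdot}{s,x_i}$ and the zombie law $\ZombP{\cdot}{\tau_i,s}$ assign these signals the same product weight $\prod_{k \neq i}\CondP{x_k}{s}$: signals are independent given $s$, $x_i$ is fixed, and the zombie simply ignores $x_i$. Thus each probability is a sum of these common weights restricted to those $\ux_{-i}$ for which the respective process realizes $\sigma_{\partial i}^{t-1}=\omega$, and the claim reduces to showing that, summand by summand in $\ux_{-i}$, the two restricting events coincide.

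The core of the argument is the tree/cavity structure: the subtree $V_{j\to i}$ is joined to the rest of the graph only through the edge $(i,j)$, so the entire evolution inside $V_{j\to i}$, and in particular $\sigma_j^{t-1}$, is a deterministic function of the subtree signals $\ux_{V_{j\to i}}$ and of $i$'s own sequence of actions (all that $j$ ever observes from $i$). In the zombie process $i$'s actions are fixed to $\tau_i$ by fiat, which is exactly what produces the factorization in \eqref{eq:q_independent_sw}. In the original process $i$'s actions are instead generated by feedback, and the heart of the proof is to show this feedback reproduces $\tau_i$. I would run a forward induction on time $t'$, maintaining the hypothesis that $\sigma_i(t'') = \tau_i(t'')$ for all $t'' \le t'$ and that every node of every subtree has acted identically in the two processes up to time $t'$. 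The inductive step uses the consistency relation twice: given histories matched through time $t'-1$, each subtree node at time $t'$ sees the same inputs in both processes (its deeper neighbours match by induction, and its view of $i$ is $\tau_i^{t'-1}$), so it repeats its zombie action; and $i$'s action at time $t'$ is $g_{i,t'}(x_i, \sigma_{\partial i}^{t'-1})$, which by the induction equals $g_{i,t'}(x_i, \omega^{t'-1})$ and hence, by the assumed $\tau_i^t = g_i^t(x_i, \omega)$, equals $\tau_i(t')$.

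Carrying the induction to time $t-1$ shows that, for every fixed $\ux_{-i}$, the original process realizes $\sigma_{\partial i}^{t-1}=\omega$ exactly when each subtree, driven by the fixed trajectory $\tau_i$, realizes the corresponding $\sigma_j^{t-1}=\omega_j$ — which is precisely the zombie event. The restricting events therefore coincide summand by summand, the two sums of common weights are equal, and the theorem follows. The main obstacle to anticipate is the apparent circularity between $i$'s actions (which depend on its neighbours) and the neighbours' actions (which depend on $i$); this is exactly what the time-forward induction resolves, the consistency indicator being the ingredient that closes the loop at the root while the tree structure keeps the subtrees decoupled.
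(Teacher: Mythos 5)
Your proposal is correct and is essentially the paper's own argument: the paper couples the original and modified processes by sharing the private signals and observes that when the consistency indicator $\Ind{\tau_i^t=g_i^t(x_i,\sigma_{\partial i}^{t-1})}$ is nonzero the two processes coincide up to time $t$. Your sum-over-signals decomposition and explicit forward induction in time simply unpack (and make more rigorous) the paper's one-line claim that the coupled processes are identical, correctly identifying and resolving the feedback circularity at node $i$ that the paper glosses over.
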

\begin{proof}
  We couple the original process, after choosing $s$, to the modified
  processes by setting the private signals to be identical in both.

  Now, clearly if it so happens that
  $\tau_i^t=g_i^t\left(x_i,\sigma_{\partial i}^{t-1}\right)$ then the
  two processes will be identical up to time $t$. Hence the
  probabilities of events measurable up to time $t$ will be identical
  when multiplied by $\Ind{\tau_i^t=g_i^t\left(x_i,\sigma_{\partial
        i}^{t-1}\right)}$, and the theorem follows.
\end{proof}

Using Eqs.~\eqref{eq:q_independent_sw} and \eqref{eq:p_eq_q},
we can easily write the posterior on $s$ computed by node $i$ at time $t$,
in terms of  the probabilities $\ZombP{\cdot}{\cdot}$:
\begin{align}
  \P{s|\cF_i^t}
  &\propto \P{s} \P{x_i, \sigma_{\partial i}^{t-1}|s} \nonumber \\
  &= \P{s} \P{x_i|s} \P{\sigma_{\partial i}^{t-1}|s,x_i}\nonumber \\
  &= \P{s} \P{x_i|s} \prod_{j \in \partial
    i}\ZombP{\sigma_j^{t-1}}{\sigma_i^{t-1},s}
  \label{eq:posterior_sow}
\end{align}
(Note that $\sigma_i^{t-1}$ is a deterministic function of $(x_i,
\sigma_{\partial i}^{t-1})$.)

Given that, the decision function is as before
\begin{align}
  g_{i,t}(x_i, \sigma_{\partial i}^{t-1}) = \argmax_{s \in \cS} \P{s|\cF_i^t}
  \label{eq:decision_sow}
\end{align}
As mentioned before, we assume there is a deterministic tie breaking
rule that is common knowledge.

We are finally left with the task of calculating
$\ZombP{\cdot}{\cdot}$.  The following theorem is the heart of the
dynamic cavity method and allows us to perform this calculation:
\begin{theorem}
\label{thm:cavity_recursion}
   For $j \in \partial i$ and  $t \in \N$
  \begin{align}
    \label{eq:recursion_sw}
    &\ZombP{\sigma_j^t}{\tau_i,s}= \nonumber\\
    &\sum_{\sigma_{1}^{t-1} \ldots \, \sigma_{d-1}^{t-1}}
      \sum_{x_j}\, \P{x_j|s}
          \ind \left [\sigma_j^t=g_j^t\left(x_j,(\tau_i^{t-1}, \sigma_{\partial j \backslash i}^{t-1})\right) \right ]
          \, \cdot \nonumber\\
    & \phantom{xxxxxxxxxxx} \cdot \, \prod_{l=1}^{d-1}
          \ZombP{\sigma_{l}^{t-1}}{\sigma_j^{t-1},s} \, .
  \end{align}
  where the neighbors of node $j$ are $\partial j=\{i,1,2, \ldots, d-1\}$.
\end{theorem}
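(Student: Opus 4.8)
The plan is to peel off a single round of the process at node $j$ and then use the tree structure to factor the result over $j$'s subtrees; I work throughout in the modified process where $i$ is a zombie playing $\tau_i$ and the state is the fixed $s$. First I would condition on the data that determine $j$'s behaviour. In this process $\sigma_i^{t-1}=\tau_i^{t-1}$, so by definition of $g_j^t$ the trajectory $\sigma_j^t$ is a deterministic function of $j$'s private signal $x_j$ and of the trajectories $\sigma_{\partial j\setminus i}^{t-1}=(\sigma_1^{t-1},\dots,\sigma_{d-1}^{t-1})$ of its remaining neighbours. Summing over all values of these variables and inserting the indicator that pins the target trajectory gives
\begin{align*}
\ZombP{\sigma_j^t}{\tau_i,s}
=\sum_{x_j}\sum_{\sigma_1^{t-1}\dots\sigma_{d-1}^{t-1}}
\ind\left[\sigma_j^t=g_j^t\left(x_j,(\tau_i^{t-1},\sigma_{\partial j\setminus i}^{t-1})\right)\right]
\ZombP{x_j,\sigma_1^{t-1},\dots,\sigma_{d-1}^{t-1}}{\tau_i,s}.
\end{align*}
Comparing with the statement, all that remains is to show that the joint weight factors as
\[
\ZombP{x_j,\sigma_1^{t-1},\dots,\sigma_{d-1}^{t-1}}{\tau_i,s}
=\P{x_j|s}\prod_{l=1}^{d-1}\ZombP{\sigma_l^{t-1}}{\sigma_j^{t-1},s},
\]
where $\sigma_j^{t-1}$ is the trajectory determined by the summation variables.

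I would establish this factorization in two pieces. The signal $x_j$ is independent of all signals in the subtrees $V_{l\to j}$ given $s$, so it contributes the marginal $\P{x_j|s}$. For the remaining factor I invoke the tree structure: $j$ is a cut vertex separating $V_{1\to j},\dots,V_{d-1\to j}$ from one another and from $i$'s side of the edge $(i,j)$, and each subtree observes the rest of the graph only through $j$'s action sequence. Hence, once $j$'s trajectory $\sigma_j^{t-1}$ is fixed, the subtree trajectories become mutually independent and the law of $\sigma_l^{t-1}$ coincides with its law in the process where $j$ itself is the zombie clamped to $\sigma_j^{t-1}$, namely $\ZombP{\sigma_l^{t-1}}{\sigma_j^{t-1},s}$; this is exactly \eqref{eq:q_independent_sw} with $j$ in the role of $i$. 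Substituting the factorization back into the sum then yields the recursion.

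The step I expect to be the main obstacle is precisely this factorization, because $j$ and its children are coupled in both directions: $j$'s votes depend on the children's past votes, while the children's votes depend on $j$'s. The way I would resolve this is through the temporal structure of the update rule — each agent's action at time $t'$ depends only on neighbours' actions strictly before $t'$. Fixing the children's trajectories (hence $j$'s trajectory) lets me unroll the process one time step at a time: at each step $j$'s move is already determined, so every child's move is a function of its own subtree signals together with $j$'s previously fixed votes, which makes the subtree processes conditionally independent and identical to the corresponding zombie processes. Put differently, one must argue that \emph{conditioning} on $j$'s realized trajectory produces the same subtree laws as \emph{intervening} to replace $j$ by a zombie, and the tree hypothesis is exactly what makes the two coincide. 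Once this peeling is carried out carefully the remaining manipulations are routine bookkeeping.
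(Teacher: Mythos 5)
Your argument is correct and is essentially the paper's proof in a different order of operations: where the paper sums over the full signal vector $\ux$ of $V_{j\to i}^t$ and decomposes the indicator $\ind\bigl(\traj^t=F_{j\to i}^t(\ux,\tau_i^t)\bigr)$ into a consistency indicator at $j$ times per-subtree indicators, you sum over the child trajectories first and factor the joint law $\ZombP{x_j,\sigma_1^{t-1},\dots,\sigma_{d-1}^{t-1}}{\tau_i,s}$ — the same computation, with the same key step. You also correctly isolate and resolve the one genuine subtlety (that conditioning on $j$'s realized trajectory yields the same subtree laws as clamping $j$ to a zombie), via the same temporal unrolling that underlies the paper's Eq.~\eqref{eq:recursion_ind}.
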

We mention without proof that the recursion easily generalizes to the
case of a {\em random} tie-breaking rule, provided the rule is common
knowledge; it is a matter of replacing the expression $\ind \left
  [\sigma_j^t=\cdots\right ]$ with
$\P{\sigma_j^t=\cdots}$, where this probability is over
the randomness of the rule.  Eq.~\eqref{eq:posterior_sow} continues to
be valid in this case.

The following proof is similar to the proof of Lemma 2.1
in~\cite{KanMon:09}, where the dynamic cavity method is introduced and
applied to a different process.
\begin{proof}
  In the modified process the events in the different branches that
  $i$ sees are independent. We therefore consider $V_{j \to i}$ only,
  and view it as a tree rooted at $j$. Also, for convenience we define
  $\sigma_i^t \equiv \tau_i^t$; note that the random variable
  $\sigma_i^t$ does not exist in the modified process, as $i$'s
  trajectory is fixed to $\tau_i$.

  Let $\ux$ be the vector of private signals of $j$ and all the
  vertices up to a distance $t$ from $j$ (call this set of vertices
  $V_{j \rightarrow i}^t$).  For each $l\in\{1,\dots, d-1\}$, let
  $\ux_l$ be the vector of private signals of $V_{l \to j}^{t-1}$.
  Thus, $\ux=(x_j, \ux_1, \ux_2, \ldots, \ux_{d-1})$.

  The trajectory $\sigma_j^{t}$ is a function - deterministic, by our
  assumption - of $\ux$ and $\tau_i^{t}$.  We shall denote this
  function by $F_{j\rightarrow i}$ and write $\sigma_{j}^{t}=
  F_{j\rightarrow i}^{t}(\ux,\tau_i^{t})$.  This function is uniquely
  determined by the update rules $g_l^t\left(x_l,\sigma_{\partial
      l}^{t-1}\right)$ for $l \in V_{j \rightarrow i}^t$.

  We have therefore
  \begin{eqnarray}
    \ZombP{\sigma_j^{t}=\traj^{t}}{\tau_i^{t},s} = \sum_{\ux}
    \CondP{\ux}{s} \ind \!\left(
      \traj^{t} = F_{j \rightarrow i}^{t}(\ux,\tau_i^{t})\right) .
    \label{eq:recursion_defn}
  \end{eqnarray}
  We now analyze each of the terms appearing in this sum.  Since the
  initialization is i.i.d., we have
  \begin{eqnarray}
    \CondP{\ux\,}{s}=
    \CondP{x_j}{s}\CondP{\ux_1}{s} \CondP{\ux_2}{s} \ldots
    \CondP{\ux_{d-1}}{s}\, .\label{eq:recursion_prob}
  \end{eqnarray}

  The function $F^{t}_{j \rightarrow i}(\cdots)$ can be decomposed as
  follows:
  \begin{align}
&    \ind \!\left(\traj^{t} = F^{t}_{j \rightarrow i}
      (\ux,\tau_i^{t})\right) =
    \sum_{\sigma_1^{t-1}\dots\sigma_{d-1}^{t-1}}
    \!\! \ind\!\left(\traj^t =
      g_j^t(x_{j},\sigma_{\partial j}^{t-1})\right)
 \nonumber\\
& \phantom{xxxxxxxxxxx}    \cdot \prod_{l=1}^{d-1}\ind \!\left(\sigma_l^{t-1} = F^{t-1}_{l
        \rightarrow j}(\ux_l, \traj^{t-1})\right) .
    \label{eq:recursion_ind}
  \end{align}
  Using Eqs.~(\ref{eq:recursion_prob}) and (\ref{eq:recursion_ind}) in
  Eq.~(\ref{eq:recursion_defn}) and separating terms that depend only
  on $\ux_i$, we get
  \begin{align*}
&    \ZombP{\sigma_j^{t} = \traj^{t}}{ \tau_i^{t},s} =\\[3pt]
 &
\sum_{\sigma_{1}^{t-1} \ldots \, \sigma_{d-1}^{t-1}}
    \sum_{x_j}\P{x_j|s} \ind \left(\traj^t = g_j^t(x_{j},\sigma_{\partial j}^{t-1}\right) \, \cdot\\
    & \phantom{====} \cdot \, \prod_{l=1}^{d-1} \sum_{\ux_l}
    \CondP{\ux_l}{s} \; \ind \left(\sigma_l^{t-1} = F^{t-1}_{l
        \rightarrow j}(\ux_l, \traj^{t-1})\right) \, .
  \end{align*}
  The recursion follows immediately by identifying that the product
  over $l$ in fact has argument
  $\ZombP{\sigma_{l}^{t-1}}{\sigma_j^{t-1},s}$.
\end{proof}

\subsection{The Agents' Calculations}

We now have in place all we need to perform the agent's
calculations. At time $t=0$ these calculations are trivial. Assume
then that up to time $t$ each agent has calculated the following
quantities:
\begin{enumerate}
\item $\ZombP{\sigma_j^{t-1}}{\tau_i^{t-1},s}$, for all $s$
  and for all $i,j \in V$ such that $j \in \partial i$, and for all
  $\tau_i^{t-1}$ and $\sigma_j^{t-1}$.
\item
  $g_i^t(x_i,\sigma_{\partial i}^{t-1})$ for all $i$, $x_i$ and
  $\sigma_{\partial i}^{t-1}$.
\end{enumerate}
Note that these can be calculated without making any observations -
only knowledge of the graph is needed.

At time $t+1$ each agent makes the following calculations:
\begin{enumerate}
\item $\ZombP{\sigma_j^{t}}{\tau_i^t,s}$ for all $s,i,j,\sigma_j^{t},\tau_i^t$. These can be calculated using
  Eq.~\eqref{eq:recursion_sw}, given the quantities from the previous
  iteration.
\item $g_i^{t+1}(x_i,\sigma_{\partial i}^t)$ for all $i$, $x_i$ and
  $\sigma_{\partial i}^t$. These can be calculated using
  Eqs.~\eqref{eq:posterior_sow} and \eqref{eq:decision_sow} and the
  the newly calculated $\ZombP{\sigma_j^{t}}{\tau_i^t,s}$.
\end{enumerate}

Since agent $j$ calculates $g_i^{t+1}$ for all $i$, then she in
particular calculates $g_j^{t+1}$. Therefore, she can use this to
calculate her next action, once she observes her neighbors' actions.
A simple calculation yields the following lemma.

\begin{lemma}
  In a tree graph $G$ with maximum degree $d$, the agents can
  calculate their actions up to time $t$ with computational effort
  $n2^{O(td)}$.
\end{lemma}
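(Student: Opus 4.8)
The plan is to bound the cost of the iterative scheme just described by charging work to the two families of tabulated objects it maintains: the zombie messages $\ZombP{\sigma_j^t}{\tau_i^t,s}$ indexed by directed edges $(i,j)$ (with $j \in \partial i$), and the decision tables $g_i^t$ indexed by nodes. Because these depend only on the graph and not on any realized signals, I would organize the computation as a single dynamic program that sweeps over time $t'=0,1,\dots,t$: at level $t'$ every edge message is obtained from the level-$(t'-1)$ messages of its children via the recursion of Theorem~\ref{thm:cavity_recursion}, and every decision table from the level-$t'$ messages via \eqref{eq:posterior_sow}--\eqref{eq:decision_sow}. The whole estimate then reduces to (i) counting the table entries at a fixed level, (ii) bounding the cost of filling one entry, and (iii) summing over levels.

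For the counting, the one fact to keep straight is that a trajectory $\sigma_j^{t}$ has $t+1$ coordinates, each in the finite set $\cS$, so there are $|\cS|^{t+1}=2^{O(t)}$ trajectories of length $t$, and likewise $2^{O(t)}$ choices of the pair $(\tau_i^t,s)$ that indexes a message on a fixed edge. The dominant cost lives in the recursion \eqref{eq:recursion_sw}: filling one message vector $\bigl(\ZombP{\sigma_j^t}{\tau_i^t,s}\bigr)_{\sigma_j^t}$ takes a single pass over the tuples $(\sigma_1^{t-1},\dots,\sigma_{d-1}^{t-1},x_j)$, of which there are $(|\cS|^{t})^{d-1}\,|\cX| = 2^{O(td)}$, each tuple costing a $g_j^t$ lookup and a product of $d-1$ previously tabulated messages, i.e.\ $\poly(d)$ work. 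Multiplying by the $2^{O(t)}$ message vectors per edge, all level-$t$ messages on one edge cost $2^{O(t)}\cdot 2^{O(td)} = 2^{O(td)}$, where the constants $|\cS|,|\cX|$ and the $\poly(d)$ factor are absorbed into the exponent. The decision tables are no worse: $g_i^t$ ranges over $|\cX|\,(|\cS|^{t})^{d}=2^{O(td)}$ arguments $(x_i,\sigma_{\partial i}^{t-1})$, and by \eqref{eq:posterior_sow}--\eqref{eq:decision_sow} each argument needs a product of at most $d$ message lookups and an $\argmax$ over $\cS$, again $\poly(d)$ apiece. Since a tree has $n-1$ edges, there are $O(n)$ directed edges and $O(n)$ nodes, so the total work at level $t$ is $O(n)\cdot 2^{O(td)} = n\,2^{O(td)}$.

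Finally I would sum over levels: $\sum_{t'=0}^{t} n\,2^{O(t'd)}$ is a geometric series dominated by its last term, giving $n\,2^{O(td)}$ overall (equivalently $2^{O(td)}$ charged to each agent for the messages on its incident edges and for its own decision table). The determinism of the tie-breaking rule is used implicitly to guarantee that each $g_i^t$ is a genuine function, so that the messages and decision tables are well defined and storable as finite arrays. I expect the only real subtlety to be the exponent bookkeeping in step (ii): one must verify that the $2^{O(td)}$ arises from the single inner sum over the $d-1$ children's length-$t$ trajectories in \eqref{eq:recursion_sw}, and that tabulating the child messages has already replaced the naive enumeration of the entire depth-$t$ neighborhood (which costs $2^{O((d-1)^t)}$ in Section~\ref{sec:simple_dp}); everything else is a product of $2^{O(t)}$ index ranges with $\poly(d)$ per-entry cost, all of which are safely absorbed.
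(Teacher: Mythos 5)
Your proposal is correct and is precisely the ``simple calculation'' the paper alludes to without writing out: counting the $2^{O(t)}$ message indices per directed edge, the $2^{O(td)}$ summation in the cavity recursion and the $2^{O(td)}$-entry decision tables, multiplying by the $O(n)$ edges and nodes of the tree, and summing the geometric series over rounds. The bookkeeping (including the ordering that makes $g_j^t$ available as a lookup when the level-$t$ messages are filled in) all checks out.
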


In fact, each agent does not need to perform calculations for the
entire graph. It suffices for node $i$ to calculate quantities up to
time $t'$ for nodes at distance $t-t'$ from node $i$ (there are at
most $(d-1)^{t-t'}$ such nodes). A short calculation yields an
improved bound on computational effort.
\begin{theorem}
\label{thm:cavity_computational_effort}
  In a tree graph $G$ with maximum degree $d$, each agent can
  calculate her action up to time $t$ with computational effort
  $2^{O(td)}$.
\end{theorem}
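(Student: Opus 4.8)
The plan is to trace the dependency structure of the recursion and observe that each step backward in time peels off one layer of the tree, so that deep layers are needed only for short time horizons. Concretely, Eq.~\eqref{eq:posterior_sow} shows that node $i$'s action at time $t$ is determined by the cavity quantities $\ZombP{\sigma_j^{t-1}}{\sigma_i^{t-1},s}$ on the directed edges $(j\to i)$ with $j\in\partial i$, i.e.\ by time-$(t-1)$ tables on edges at distance $1$. By the recursion of Theorem~\ref{thm:cavity_recursion}, each such table at time $t-1$ is built from time-$(t-2)$ tables on the edges $(l\to j)$ with $l\in\partial j\setminus\{i\}$, which sit at distance $2$. Iterating, I would prove by induction on $k$ that computing $i$'s action at time $t$ requires only the cavity tables up to time $t-k$ on the directed edges at distance $k$ from $i$; in particular nothing beyond distance $t$ is ever touched. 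As noted in the paper, none of these tables require observations, only knowledge of the local graph.

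Next I would count. In a tree of maximum degree $d$ the number of nodes, hence of directed edges, at distance $k$ from $i$ is at most $d(d-1)^{k-1}=O((d-1)^k)$. For a single directed edge, producing the entire cavity table up to time $\tau$ costs $2^{O(\tau d)}$: the table is indexed by the trajectories $\sigma_j^\tau,\tau_i^\tau\in\cS^{\tau+1}$ and by $s$, giving $2^{O(\tau)}$ entries, and by Theorem~\ref{thm:cavity_recursion} each entry is a sum over the $d-1$ neighbor trajectories $\sigma_1^{\tau-1},\dots,\sigma_{d-1}^{\tau-1}\in\cS^{\tau}$ together with $x_j\in\cX$, i.e.\ over $2^{O(\tau d)}$ terms (treating $|\cS|,|\cX|$ as constants, so $|\cS|^{\tau d}=2^{O(\tau d)}$). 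Summing this per-time-step cost over $\tau\le t-k$ is dominated by its top term $2^{O((t-k)d)}$.

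Combining the two bounds and writing $2^{O((t-k)d)}=2^{c(t-k)d}$ for a suitable constant $c\ge 1$, the total effort is at most
\begin{equation*}
  \sum_{k=0}^{t} O\!\left((d-1)^k\right) 2^{c(t-k)d}
  = 2^{ctd}\sum_{k=0}^{t} 2^{\,k\left(\log_2(d-1)-cd\right)} .
\end{equation*}
The final step, which is the only subtle point, is to check that this geometric-type sum is bounded by a constant. This holds because descending one further layer multiplies the work by the branching factor $d-1$ but divides it by a genuinely exponential-in-$d$ factor $2^{\Omega(d)}$: since $\log_2(d-1)-cd\le\log_2(d-1)-d<0$, the summand decays geometrically in $k$, the sum is dominated by the $k=0$ term, and the whole expression is $2^{ctd}\cdot O(1)=2^{O(td)}$, as claimed. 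The hard part is thus not the recursion itself—which is supplied by Theorem~\ref{thm:cavity_recursion}—but verifying this cost/branching trade-off, namely that the exponential shrinkage of the time horizon with depth outpaces the exponential growth in the number of nodes with depth.
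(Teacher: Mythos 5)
Your proposal is correct and takes essentially the same route as the paper: node $i$ need only compute cavity tables up to time $t-k$ for the $O((d-1)^k)$ directed edges at distance $k$, and the resulting sum $\sum_{k=0}^{t}(d-1)^k\,2^{O((t-k)d)}$ is dominated by its $k=0$ term since $2^{\Omega(d)}$ beats the branching factor $d-1$. The paper leaves this as ``a short calculation''; you have supplied precisely that calculation, including the one genuinely delicate point (the cost/branching trade-off).
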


\subsection{Dynamic Cavity Algorithm: Extensions}

Our algorithm  admits several extensions that we explore in this section:
Section \ref{subsubsec:loops_hubs}
discusses graphs with loops and `hubs', Section \ref{subsubsec:random_graphs}
discusses random graphs, Section \ref{subsubsec:distribution_on_graph} relaxes
the assumption that the entire graph is common knowledge and Section \ref{subsubsec:random_subsets_neighbors} allows nodes/edges to be inactive
in some rounds.

First we mention some straightforward generalizations:

It is easy to see that dynamic cavity recursion (Theorem \ref{thm:cavity_recursion})
does not depend on any special properties of the Bayesian update rule.
The update rule $g_{i,t}(\cdot)$ can be arbitrary. Thus, if agent $i$ wants
to perform a Bayesian update, he can do so (exactly) using our approach even
if his neighbor, agent $j$, is using some other update rule.

\begin{remark}
The dynamic cavity recursion can be used to enable computations of agents
even if some of them are using arbitrary update rules (provided the rules are
`well specified' and common knowledge).
\end{remark}
For instance, our approach should be applicable in `partial Bayesian' settings.

Our algorithm is easily modified for the case of a general
  finite action set $\cA$ that need not be the same as $\cS$,
  associated with a payoff function $u:\cA \times \cS \rightarrow
  \R$. Moreover the action set and payoff function can each be player
  dependent ($\cA_i$, $u_i$ respectively).

We already mentioned that there is a simple generalization to
  the case of random tie breaking rules (that are common knowledge).

Instead of having only undirected edges (corresponding to
  bidirectional observations), we can allow a subset of the edges of
  the tree to be directed. In this case, the same algorithm works with
  suitably defined neighborhood sets $\partial i$.
In other words, our result holds for the class of
  directed graphs lacking cycles of length greater than two (which
  correspond to undirected edges).

\subsubsection{Loops and Hub nodes}
\label{subsubsec:loops_hubs}

 A class of graphs that are not trees, but for which this dynamic
  cavity method can be easily extended is that of trees with `hub'
  nodes in addition.

  Consider then a graph that is not a tree, but can be transformed
  into a tree by the removal of some small set of nodes $\Vl \subset
  V$. 
  Then the same calculations above can
  still be performed, with a time penalty of $|\cX|^{|\Vl|}$; the
  calculation in Eq.~\eqref{eq:recursion_sw} is simply repeated for
  each possible set of private signals of the hub nodes, and the
  probabilities in Eq.~\eqref{eq:posterior_sow} are arrived at by
  averaging the $|\cX|^{|\Vl|}$ different possible cases. In fact, one
  may not even need to average over all nodes in $\Vl$, since at iteration $t$
  only those inside $B^t_i$ (the ball of radius $t$ around $i$) effect
  the outcome of $i$'s calculations. Hence the complexity of
  calculations up to iteration $t$ is now $|\cX|^{n_t} 2^{O(td)}$,
  where $n_t = \max (|B^t_i \cap \Vl|)_{i \in V}$.

  If we also allow directed edges in this model, then we can extend it
  to include nodes of unlimited in-degree, i.e., some nodes may be observed by a unbounded
  number of others. These are agents who are
  observed by any number (perhaps an infinity) of other agents, in the
  spirit of Bala and Goyal's ``royal family'' \cite{BalaGoyal:98}. We call such nodes
 `hubs' for obvious reasons. For instance, a popular blogger or a newspaper might
 constitute such a hub. Here
  too the same computational guarantees holds.


\subsubsection{Random graphs}
\label{subsubsec:random_graphs}

Consider a random graph on $n$ nodes drawn from the configuration model with a given degree distribution\footnote{In the configuration model, one first
  assigns a degree to each node, draws the appropriate number of
  `half-edges' and then chooses a uniformly random pairing between
  them. One can further specify that a graph constructed thus is
  `rejected' if it contains double edges or self-loops; this does not
  change any of the basic properties, e.g., the local description, of
  the ensemble.} It is well known that such graphs are locally tree-like
  with high probability(see, e.g. \cite{DemboMontanari:09}). More
  formally, for any $t < \infty$ we have
  \begin{align}
  \lim_{n \rightarrow \infty} \P{B_i^t \mbox{ is a tree.}} = 1 \, .
  \end{align}
  Since node calculations up to time $t$ depend only on $B_i^t$, it
  follows that with high probability (w.h.p.), for an arbitrarily selected node, the tree
  calculations suffice for any constant number of
  iterations.\footnote{In fact, as mentioned earlier, nodes with a
    small number of loops in the vicinity can also do their
    calculations without trouble.} As we show in Section
  \ref{sec:rapid_convergence}, just $O(\log \log 1/\eps)$ iterations
  (a small number independent of $n$) are enough to learn the true
  state of the world with probability at least $1-\eps$ for any
  $\eps>0$, provided private signals are not too noisy. Thus, our
  computational approach works for random graphs w.h.p.

\subsubsection{Learning without Knowledge of the Graph}
\label{subsubsec:distribution_on_graph}
Here we consider the situation where nodes do not know the
actual graph $G$, but know some distribution over possibilities for
$G$. This is potentially a more realistic model. Moreover, the assumption
that agents are assumed to know the entire graph structure is considered
 a weakness of the model of Gale and Kariv. We address this
issue here, showing that our algorithm can be modified to allow
Bayesian estimation in this case as well.

Let $G \equiv G_n$ be a random graph of $n$ nodes constructed
according to the configuration model for a given (node perspective
degree) distribution. Denote the degree distribution by $\Dv$, so that
$\Dv(d) \equiv$ probability that a randomly selected node has degree
$d$.

Now, in this ensemble, the local neighborhood up to distance $D$ of an
arbitrary node $v$ with fixed degree $d_v$ converges in distribution
as $n \rightarrow \infty$ to the following: Each of the neighbors of
node $v$ has a degree drawn independently according to the `edge
perspective' degree distribution $\De$, defined by:
\begin{align*}
\De(d) = \frac{d\Dv(d)}{\sum_{d' \in \N} d'\Dv(d')}
\end{align*}

Further, each of the neighbors of the neighbors (except $v$ itself)
again have a degree drawn independently according to $\De(d)$, and so
on up to depth $D$. Call the resulting distribution over trees $T^D_{d_v}$.

Now suppose that agents are, in fact, connected in a graph drawn from
the ensemble $G_n$ with degree distribution $\Dv$. Suppose that each
node $u$ knows the distribution $\Dv$ and its own degree $d_u$, but
does not know anything else about $G_n$.\footnote{Other `knowledge'
  assumptions can be similarly handled, for instance where a node
  knows its own degree, the degree of its neighbors and $\Dv$.} Further,
suppose that this is common knowledge.  Now in the limit $n
\rightarrow \infty$, an exact Bayesian calculation for a node $v$ up
to time $t$ depends on $\Dv$ via $T^t_{d_v}$. Since nodes know
only their own degree, there are only $\Delta$ different `types' of nodes,
where $\Delta$ is the size of the support of $\De(d)$. There is one
type for each degree. This actually makes computations slightly simpler
than in an arbitrary known graph.


Fix state $s$. Take an arbitrary node $i$. Make it a `zombie'
following the vote trajectory $\tau_i$. Now fix some $\partial i$
(ensure $\Dv(|\partial i|)>0$). Choose arbitrary $j \in \partial
i$. Define $\ZombP{\sigma_j^t = \omega_j^t}{\tau_i^t,s}$ as the
probability of seeing trajectory $\sigma_j^t = \omega_j^t$ at node $j$
in this setting. This probability is over the graph realization
(given $\partial i$) and over the private signals.
Note here that {\em $\ZombP{\sigma_j^t = \omega_j^t}{\tau_i^t,s}$ is
  the same for any $i$, $\partial i$ and $j \in \partial i$}.

Eqs.~\eqref{eq:q_independent_sw}, \eqref{eq:posterior_sow} and
\eqref{eq:decision_sow} continue to hold w.h.p.\footnote{We need the ball of radius $t$ around $i$ to be a tree.} for the same reasons as before.

The dynamic cavity recursion, earlier given by Eq.~\eqref{eq:recursion_sw}, becomes
\begin{align}
    \label{eq:recursion_sw_unknowngraph}
    \ZombP{\sigma_j^t}{\tau_i,s}&= \sum_{d \in \N}\, \De(d)
      \sum_{\sigma_{1}^{t-1} \ldots \, \sigma_{d-1}^{t-1}}
      \sum_{x_j}\, \P{x_j|s} \, \cdot \nonumber\\
&\cdot \,          \ind \left [\sigma_j^t=g_j^t\left(x_j,(\tau_i^{t-1}, \sigma_{\partial j \backslash i}^{t-1})\right) \right ]
          \; \cdot \nonumber\\
&         \cdot \; \prod_{l=1}^{d-1}
          \ZombP{\sigma_{l}^{t-1}}{\sigma_j^{t-1},s} \, .
\end{align}
 We have written the recursion assuming the neighbors of $j$ are named according to $\partial j \backslash i= \{1, 2, \ldots, d-1\}$. Again, this holds w.h.p.\ with respect to $n$.

We comment that there is a straightforward generalization to the case
of a multi-type configuration model with a finite number of
types. Nodes may or may not be aware of the type of each of their
neighbors (both cases can be handled).  For instance, here is a simple
example with two types: There are `red' agents and `blue' agents, and
each `red' agent is connected to 3 `blue' agents, whereas each `blue'
agent is connected to either 5 or 6 `red' agents with equal
likelihood. In this case the degree distribution itself ensures that
nodes know the type of their neighbors as being the opposite of their
own type.
Multi-type configuration models are of interest since they allow for
a rich variety `social connection' patterns.

\subsubsection{Observing random subsets of neighbors}
\label{subsubsec:random_subsets_neighbors}

We may not interact with each of our friends every day.
Suppose that for each edge $e$, there is a probability $p_e$ that the
edge will be `active' in any particular iteration, independent of
everything else. Let $a_e(t) \in \{*, {\tt a}\}$, be an indicator
variable for whether edge $e$ was active at time $t$ ($\tt a$ denotes
`active').  Now, the observation by node $i$ of node $j$ belongs to an
extended set that includes an additional symbol $*$ corresponding to
the edge being inactive. Thus, there are $(|\cS|+1)^{t+1}$ possible
observed trajectories up to time $t$. Our algorithm can be easily
adapted for this case. The modified `zombie' process involves fixing
state of the world $s$, trajectory $\tau_i$ and also $(a_{ij}(t))_{j
  \in \partial i}$ for all times $t$. The form of posterior on the
state of the world, Eq.~\eqref{eq:posterior_sow}, remains
unchanged. The cavity recursion Eq.~\eqref{eq:recursion_sw} now
includes a summation over the possibilities for $(a_1^{t-1}, \ldots,
a_{d-1}^{t-1})$. The overall complexity remains $2^{O(td)}$.

The case where node $v$ becomes inactive with some probability $p_v$ in an iteration, independent
of everything else, can also be handled similarly. A suitable formulation can also be obtained
when both the above situations are combined, so that both nodes and edges may be
inactive in an iteration.

\section{Rapid learning on trees}
\label{sec:rapid_convergence}

We say that there is {\em doubly exponential convergence} to the state of the world $s$ if the error probability $\P{\sigma_i(t)\neq s}$ decays with round number $t$ as
\begin{align}
-\log(\P{\sigma_i(t)\neq s}) \in \Omega(b^t)
\end{align}
where $b>1$ is some constant.

The following is an immediate corollary of Theorem \ref{thm:cavity_computational_effort}.
\begin{corollary}
\label{coro:polylog}
Consider iterative Bayesian learning on a tree of with maximum degree $d$. If we have doubly
exponential convergence to $s$, then computational effort that is polylogarithmic in $(1/\eps)$ suffices to achieve error probability $\P{\sigma_i(t)\neq s} \leq \eps$.
\end{corollary}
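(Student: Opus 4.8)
The plan is to compose the convergence-rate hypothesis with the per-agent computational bound of Theorem~\ref{thm:cavity_computational_effort}. The doubly exponential convergence assumption supplies constants $b>1$ and $c>0$ and a threshold $t_0$ such that for all $t \ge t_0$,
\begin{align*}
-\log \P{\sigma_i(t)\neq s} \ge c\,b^t,
\qquad\text{equivalently}\qquad
\P{\sigma_i(t)\neq s} \le 2^{-c\,b^t}
\end{align*}
(the base of the logarithm only affects the constant $c$). So the first step is simply to read off how many iterations are needed to push the error below $\eps$.

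Next I would solve the inequality $2^{-c b^t} \le \eps$ for $t$. Taking logarithms twice, this holds as soon as $c\,b^t \ge \log_2(1/\eps)$, i.e.\ $b^t \ge \tfrac{1}{c}\log_2(1/\eps)$, i.e.
\begin{align*}
t \;\ge\; \frac{1}{\log b}\,\log\!\left(\tfrac{1}{c}\log_2(1/\eps)\right).
\end{align*}
Hence it suffices to run the dynamics for $t^\star = O(\log\log(1/\eps))$ rounds (the hidden constant absorbing $b$, $c$ and $t_0$), after which $\P{\sigma_i(t^\star)\neq s}\le\eps$.

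Finally I would substitute $t^\star$ into the bound of Theorem~\ref{thm:cavity_computational_effort}: each agent computes her action up to time $t^\star$ with effort $2^{O(t^\star d)} = 2^{O(d\log\log(1/\eps))} = \big(\log(1/\eps)\big)^{O(d)}$, which for fixed maximum degree $d$ is polylogarithmic in $1/\eps$, as claimed. There is no genuine obstacle here: the argument is just the composition of the two displayed bounds. The only point requiring a little care is to track that inverting a \emph{doubly} exponential decay rate produces a \emph{doubly} logarithmic iteration count, so that the single exponential in the computational bound of Theorem~\ref{thm:cavity_computational_effort} collapses to a mere polylogarithm in $1/\eps$.
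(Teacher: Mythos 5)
Your proposal is correct and is exactly the argument the paper intends: the corollary is stated as an immediate consequence of Theorem~\ref{thm:cavity_computational_effort}, obtained by inverting the doubly exponential decay to get $t^\star = O(\log\log(1/\eps))$ rounds and substituting into the $2^{O(td)}$ effort bound to obtain $(\log(1/\eps))^{O(d)}$. The paper gives no explicit proof, and your composition of the two bounds fills it in precisely as intended.
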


 We are handicapped by the fact that very little in known rigorously about convergence of iterative Bayesian learning. Nevertheless we provide the following evidence for doubly exponential convergence on
 trees:

In Section \ref{subsec:directed_trees}, we study a simple case with
two possible states of the world and two possible private signal
values on a regular \emph{directed} tree. We show that except for the
case of very noisy signals, we have doubly exponential convergence if the degree is at least five.

Next, in Section \ref{subsec:numerical_results} we state a conjecture and show that it implies doubly exponential convergence of
iterative Bayesian learning also on undirected trees. We provide numerical evidence in support of our conjecture.

\subsection{Directed trees}
\label{subsec:directed_trees}

Consider an infinite directed $d$-ary tree. By this we mean a tree graph where
each node $i$ has one `parent' who observes $i$ and $d$ `children'
whom $i$ observes, but who do not observe $i$. Learning in such a tree
is much easier to analyze (than an undirected tree) because the trajectories of the $d$ children are uncorrelated, given $s$.

We assume a binary state of the world $s$ and independent binary
signals that are each incorrect with probability $\delta$.

\begin{lemma}
\label{lemma:directed_tree_convergence}
On an infinite directed $d$-ary tree, the error probability (at any
node) at time $t$ is bounded above by $\delta_t$, where $\delta_0
\equiv \delta$ and we have a recursive definition
\begin{align}
    \delta_t \equiv \P{{\rm Binomial}(d, \delta_{t-1}) \geq d/2} \, .
 \end{align}
\end{lemma}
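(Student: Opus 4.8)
The plan is to prove the bound by induction on $t$, exploiting the fact that the myopic Bayesian vote is the MAP estimate and is therefore Bayes-optimal for $0$--$1$ loss: among all (possibly randomized) estimators of $s$ measurable with respect to $\mathcal{F}_i^t$, the rule $\argmax_{s}\CondP{s}{\mathcal{F}_i^t}$ minimizes the error probability $\P{\sigma_i(t)\neq s}$. Thus, to bound the Bayesian error from above it suffices to exhibit \emph{any} estimator built from $\mathcal{F}_i^t$ and compute its error; the natural choice is the majority of the $d$ children's time-$(t-1)$ actions. The induction hypothesis will be that $\P{\sigma_i(t-1)\neq s}\leq \delta_{t-1}$ holds uniformly for all nodes $i$.

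For the base case $t=0$, the information $\mathcal{F}_i^0$ consists only of the private signal $x_i$, and the estimator ``report $x_i$'' errs with probability exactly $\delta=\delta_0$; by Bayes-optimality, $\P{\sigma_i(0)\neq s}\leq \delta_0$. For the inductive step, fix a node $i$ with children $j_1,\dots,j_d$ (the nodes $i$ observes). On the directed tree the subtrees rooted at the children are vertex-disjoint and a child never observes $i$, so $\sigma_{j_k}(t-1)$ is a function of the private signals in $j_k$'s subtree alone; hence, conditioned on $s$, the errors $\{\sigma_{j_k}(t-1)\neq s\}$ are independent. Letting $K$ count the children whose time-$(t-1)$ vote is incorrect, each child is wrong with probability at most $\delta_{t-1}$ by the induction hypothesis, so $K$ is stochastically dominated by $\mathrm{Binomial}(d,\delta_{t-1})$. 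The majority-of-children estimator is incorrect (or tied, counting ties as errors) only when $K\geq d/2$, so its error is at most $\P{\mathrm{Binomial}(d,\delta_{t-1})\geq d/2}=\delta_t$. Since each $\sigma_{j_k}^{t-1}$ is part of $\mathcal{F}_i^t$, this estimator is admissible, and Bayes-optimality yields $\P{\sigma_i(t)\neq s}\leq \delta_t$, closing the induction.

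The two steps needing care, though both standard, are the crux of getting the argument right. First is the Bayes-optimality claim: for $0$--$1$ loss with a known prior the MAP rule minimizes error probability, and the fixed deterministic tie-breaking rule is harmless since it acts only on posterior ties, which do not change the error contribution. Second is the stochastic-domination step, namely that $d$ independent events each of probability at most $\delta_{t-1}$ produce at least $m$ occurrences with probability at most $\P{\mathrm{Binomial}(d,\delta_{t-1})\geq m}$, together with the homogeneity of the infinite regular directed tree, which lets the induction hypothesis be applied uniformly to every child. Counting the tie case ($K=d/2$ for even $d$) as an error makes the bound insensitive to the tie-breaking convention.

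I expect no genuine obstacle here: the essential feature is the directed structure, which forces the children's trajectories to be conditionally independent given $s$, and this independence is exactly what makes the $\mathrm{Binomial}$ recursion an upper bound. The inequality is loose in two places—the Bayesian is strictly better than plain majority, and stochastic domination replaces the true (smaller) per-child error—so the lemma gives a bound, not an identity, as stated.
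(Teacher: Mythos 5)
Your proposal is correct and follows essentially the same route as the paper's proof: compare the Bayesian (MAP) vote against the majority-of-children estimator, use conditional independence of the children's trajectories given $s$ on the directed tree, and bound the majority's error by the Binomial tail via stochastic domination. The only difference is that you spell out the Bayes-optimality and domination steps that the paper treats as immediate, which is fine.
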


\begin{proof}
  We proceed by induction on time $t$. Clearly, the error probability
  is bounded above by $\delta_0$ at $t=0$. Suppose, $\P{\sigma_i(t)
    \neq s} \leq \delta_{t}$. Consider a node $j$ making a decision at
  time $t+1$. Let the children of $j$ be $1, 2, \ldots, d$.  Define
  $\tsigma_j$, the opinion of the majority of the children, by
  \begin{align*}
  \tsigma_j(t+1)= \sgn\left(\sum_{l=1}^d \sigma_l(t)\right),
  \end{align*}
  where $\sgn(0)$
  is arbitrarily assigned the value $-1$ or $+1$. The `error-or-not' variables
  $[\sigma_l(t) \neq s]$ are iid,
  with $\P{\sigma_l(t) \neq s} \leq \delta_{t}$
  by the induction hypothesis. Hence,
  \begin{align}
    \P{\tsigma_j(t+1) \neq s} \leq \P{{\rm Binomial}(d, \delta_{t})
      \geq d/2} = \delta_{t+1} \, .
    \label{eq:errp_bound_tsigma}
  \end{align}

  Since the agent $j$ is Bayesian, she in fact uses the information
  $(x_j, \sigma_1^t, \ldots, \sigma_d^t)$ to compute a MAP estimate
  $\sigma_j(t+1)$ of the true state of the world. Clearly,
  $\P{\sigma_j(t+1)\neq s} \leq \P{\tsigma_j(t+1) \neq s}$. Using
  Eq.~\eqref{eq:errp_bound_tsigma}, it follows that
  $\P{\sigma_j(t+1)\neq s} \leq \delta_{t+1}$. Induction completes the
  proof.
\end{proof}

It follows (by an argument similar to the one used in the proof of
theorem \ref{thm:majority_doublyexpconv} below) that we have doubly
exponential convergence to the true state of the world, if the noise
level is not too high. We obtain
\begin{align}
-\log \P{\sigma_i(t)\neq s} \in \Omega\left((d/2)^t\right).
\label{eq:errp_bound_t_directed}
\end{align}
implying that $O(\log \log (1/\eps))$ rounds suffice to reduce the error
probability to below $\eps$.

\subsection{Bayesian vs. `majority' updates}
\label{subsec:numerical_results}
\begin{table}
\begin{center}
{\small
\begin{tabular}{ l | l | l }
Round  & Bayesian & Majority \\
\hline
0 & $0.15$ & $0.15$ \\
1 & $2.7 \cdot 10^{-2}$ & $2.7 \cdot 10^{-2}$\\
2 & $7.6 \cdot 10^{-4}$ & $1.7 \cdot 10^{-3}$\\
3 & $2.8 \cdot 10^{-7}$ & $8.4 \cdot 10^{-6}$\\
4 & $1.4 \cdot 10^{-12}$ & $2.5 \cdot 10^{-10}$\\
\end{tabular}
}
\end{center}
\caption{\small Error probability on regular tree with $d=5$ and
  $\P{x_i\neq s}=0.15$, for (i) Bayesian and (ii) majority
  updates. The agents break ties by picking their original private signals.}
\label{table:numerics}
\end{table}

We conjecture that iterative Bayesian learning leads to lower error
probabilities (in the weak sense) than a very simple
alternative update rule we call ``majority dynamics''. Under this rule
the agents adopt the action taken by the majority of their neighbors
in the previous iteration (this is made precise in Definition \ref{def:majority_dynamics}). Our conjecture is natural since the
iterative Bayesian update rule chooses the vote in each round that
(myopically) minimizes the error probability.

\begin{conjecture}
\label{conj:bayesian_no_worse_than_majority}
On any regular tree with independent identically distributed private signals,
the error probability under iterative Bayesian learning
is no larger than the error probability under majority dynamics (cf. Definition
\ref{def:majority_dynamics})
after the same number of iterations.
\end{conjecture}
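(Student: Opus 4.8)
The statement is a comparison of two stochastic processes, so I would first put them on a common probability space: fix the state of the world $s$ and the private signals $(x_i)_{i\in V}$, and let both the Bayesian trajectory $\sigma_i(t)$ and the majority trajectory $\tsigma_i(t)$ be the deterministic functions of $(s,(x_i))$ that each dynamics prescribes. The goal is then to show $\P{\sigma_i(t)\neq s}\le\P{\tsigma_i(t)\neq s}$ for every node $i$ and every $t$, which I would attempt by induction on $t$; the base case $t=0$ is an equality, since both trajectories start at the private signal. I would focus first on the binary case $\cS=\bits$, the setting of the numerical table and of Lemma~\ref{lemma:directed_tree_convergence}, where ``majority'' is the sign rule. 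For the inductive step the plan is to insert an intermediate quantity and prove the two-link chain
\begin{align*}
&\P{\sigma_i(t)\neq s}\\
&\quad\le\ \P{\Maj\big((\sigma_j(t-1))_{j\in\partial i}\big)\neq s}\\
&\quad\le\ \P{\Maj\big((\tsigma_j(t-1))_{j\in\partial i}\big)\neq s}\\
&\quad=\ \P{\tsigma_i(t)\neq s},
\end{align*}
where $\Maj$ is the majority rule of the majority dynamics, including its tie-breaking by the private signal.

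The first inequality is the easy link, and is precisely the myopic optimality that makes the conjecture plausible. Since $\cF_i^t$ contains $x_i$ together with the full neighbor trajectories $(\sigma_j^{t-1})_{j\in\partial i}$, the candidate vote $\Maj\big((\sigma_j(t-1))_{j\in\partial i}\big)$ is a deterministic function of $\cF_i^t$. The Bayesian vote is the MAP estimator $\argmax_{s}\P{s\mid\cF_i^t}$ of Eq.~\eqref{eq:decision_sow}, which by Bayes optimality minimizes the error probability over \emph{all} functions of $\cF_i^t$; hence it does at least as well as this particular majority function. This link uses nothing about the tree beyond the definition of the update.

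All the difficulty is concentrated in the second link, which is a monotonicity/stochastic-domination statement, and which I expect is the reason the result is only a conjecture. The event that $\Maj$ errs is $\{\sum_{j\in\partial i}\Ind{\sigma_j(t-1)\neq s}>d/2\}$, a \emph{monotone nondecreasing} function of the vector of neighbor error indicators (for odd $d$, as in the $d=5$ data, there are no ties; for even $d$ one fixes the private-signal tie-break and monotonicity persists). By Strassen's theorem it therefore suffices to establish the \emph{joint} coordinatewise stochastic domination
\begin{align*}
&\mathrm{law}\big((\Ind{\sigma_j(t-1)\neq s})_{j\in\partial i}\big)\\
&\quad\preceq\ \mathrm{law}\big((\Ind{\tsigma_j(t-1)\neq s})_{j\in\partial i}\big).
\end{align*}
This is strictly stronger than the per-coordinate induction hypothesis: a single agent's Bayesian vote can err on signal configurations where majority happens to be right, so there is no pointwise domination, only a distributional one, and proving it for the full vector demands control of the \emph{correlations} among the neighbors' votes.

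Here the undirected case parts company with the directed tree of Lemma~\ref{lemma:directed_tree_convergence}. There the $d$ children neither observe one another nor the parent, so their trajectories are independent given $s$; univariate domination of each marginal tensorizes into the product-measure domination, and the induction closes. In the undirected tree the neighbors of $i$ all observe $i$, and are thereby correlated. The natural tool is the cavity decoupling of Eq.~\eqref{eq:q_independent_sw}: conditioned on $i$'s own trajectory the neighbor subtrees become independent, which would reduce the joint domination to a conditional per-neighbor comparison followed by tensorization. The obstacle I anticipate is that $i$'s trajectory is itself \emph{different} under the two dynamics, so the conditioning events---and hence the conditional laws of the neighbors---are not directly comparable; one would need a strengthened inductive invariant that simultaneously couples the root trajectory and dominates the conditional neighbor error vectors across both processes. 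Formulating such an invariant and showing it is preserved by the joint Bayesian/majority update is, I believe, the crux, and the reason a proof has remained elusive; the plurality case $|\cS|>2$, where the majority error is no longer a monotone function of per-neighbor error indicators, would require further combinatorial work on top of this.
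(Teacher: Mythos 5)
First, be aware that the statement you were asked to prove is stated as a \emph{conjecture} in the paper: the authors offer no proof, only exact numerical computations (Tables \ref{table:numerics} and \ref{table:3_pt3}) and the informal observation that the Bayesian update is myopically optimal. So there is no paper proof to match; the question is whether your argument closes the gap, and it does not --- as you yourself acknowledge. Your first link is sound, and it is in fact the same comparison the paper uses inside the proof of Lemma \ref{lemma:directed_tree_convergence}: the majority of the neighbors' time-$(t-1)$ Bayesian votes is a function of $\cF_i^t$, and the MAP estimator minimizes error probability over all functions of $\cF_i^t$. On the \emph{directed} tree this suffices because the children's trajectories are conditionally independent given $s$, so the per-child marginal bound tensorizes into a binomial tail; that is precisely why the paper can prove doubly exponential convergence there while only conjecturing the comparison on undirected trees.

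The genuine gap is your second link, and it is worth being precise about why it is not merely ``the hard part'' but a step that may fail outright. The joint stochastic domination
\[
\mathrm{law}\big((\Ind{\sigma_j(t-1)\neq s})_{j\in\partial i}\big)\ \preceq\ \mathrm{law}\big((\Ind{\hsigma_j(t-1)\neq s})_{j\in\partial i}\big)
\]
is strictly stronger than the conjecture: the conjecture asserts only an inequality of marginals at each node, whereas the probability that a majority of $d$ neighbors errs is governed as much by the correlations among their errors as by the marginals (for small marginal error $p$, the independent tail $\P{\mathrm{Binomial}(d,p)\geq d/2}$ is far below $p$, while perfectly correlated errors give probability $p$). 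There is no a priori reason the Bayesian neighbors' errors are less positively correlated than the majority-dynamics neighbors' errors; if anything, Bayesian agents aggregating overlapping information through the common node $i$ could err on highly correlated events. So even if the conjecture is true, your chain could break at the middle inequality, and conversely a counterexample to the joint domination would not refute the conjecture. The cavity conditioning you invoke does not rescue this: Eq.~\eqref{eq:q_independent_sw} decouples the neighbor subtrees only after conditioning on $i$'s \emph{own} trajectory, which is a different random object under the two dynamics, so the conditional laws are not directly comparable, and the ``strengthened inductive invariant'' you gesture at is exactly the missing mathematical content. In short: a legitimate plan of attack with an honest diagnosis of where it stalls, but not a proof --- consistent with the statement's status as an open conjecture in the paper.
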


We use $\hsigma_i(t)$ to denote votes under the majority dynamics. 

In Appendix \ref{app:majority}, we show doubly exponential convergence for majority dynamics
on regular trees:
\begin{theorem}
\label{thm:majority_doublyexpconv}
Assume binary $s$ with uniform prior. Agents' initial votes
$\hsigma_i(0)$ are correct with probability $1-\delta$, and independent conditioned on $s$.
Let $i$ be any node in an (undirected) $d$ regular tree for $d \geq 5$. Then,
under the majority dynamics,
\begin{align}
-\log \P{\hsigma_i(t)\neq s} \in \Omega\left(\left(\half(d-2)\right)^t\right).
\label{eq:errp_bound_t}
\end{align}
when $\delta < (2e(d-1)/(d-2))^{-\frac{d-2}{d-4}}$.
\end{theorem}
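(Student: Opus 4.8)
The plan is to reduce the analysis to a one-dimensional recursion for a ``cavity'' error probability, the genuine difficulty being that on an undirected tree the votes of a node's neighbors are correlated (through the reflected influence of the node itself), so the clean independent-children recursion of Lemma~\ref{lemma:directed_tree_convergence} does not apply directly. I would circumvent this by defining, for each directed edge $k \to j$ (i.e.\ $k$ together with the subtree $G_{k\to j}$ hanging away from $j$) and each time $\tau$, an \emph{implied error event} $E^\tau_{k \to j}$: set $E^0_{k\to j} = \{\hsigma_k(0)\neq s\}$, and declare $E^\tau_{k\to j}$ to occur iff at least $m := \lceil d/2\rceil - 1$ of the children $l \in \partial k \setminus\{j\}$ satisfy $E^{\tau-1}_{l\to k}$. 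By construction $E^\tau_{k\to j}$ is measurable with respect to the initial signals in $G_{k\to j}$ only.

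The main obstacle is this correlation, and the two facts that resolve it I would establish by induction on $\tau$. First (domination): $\{\hsigma_k(\tau)\neq s\} \subseteq E^\tau_{k\to j}$ for every parent $j$. Indeed, an error of $k$ at time $\tau$ forces at least $\lceil d/2\rceil$ of $k$'s $d$ neighbors to be wrong at time $\tau-1$; discarding the single parent $j$ leaves at least $\lceil d/2\rceil - 1 = m$ wrong children, each of which (by the induction hypothesis) triggers $E^{\tau-1}_{l\to k}$. Second (independence): since $E^\tau_{k\to j}$ depends only on signals in $G_{k\to j}$, the events $\{E^{\tau-1}_{l\to k}\}_{l\in\partial k\setminus\{j\}}$ are governed by disjoint subtrees and hence are i.i.d.\ given $s$. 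Writing $p_\tau := \P{E^\tau_{k\to j}}$ (well-defined and edge-independent by symmetry of the regular tree), these facts give the exact recursion $p_\tau = \P{{\rm Binomial}(d-1,p_{\tau-1})\geq m}$ with $p_0 = \delta$, together with the root bound $\P{\hsigma_i(t)\neq s} \leq \P{{\rm Binomial}(d,p_{t-1})\geq \lceil d/2\rceil}$.

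It then remains to analyze this scalar recursion. Using the union bound $\P{{\rm Binomial}(n,p)\geq m}\leq \binom{n}{m}p^m$ I would obtain $p_\tau \leq K\, p_{\tau-1}^{\,m}$ with $K = \binom{d-1}{m} \leq (2e(d-1)/(d-2))^{(d-2)/2}$, via $\binom{d-1}{m}\leq (e(d-1)/m)^m$ and $m \geq (d-2)/2$. Setting $a_\tau = -\log p_\tau$, the recursion reads $a_\tau \geq m\,a_{\tau-1} - \log K$, whose fixed point is $a^\ast = \log K/(m-1)$; since $m \geq (d-2)/2 > 1$ precisely when $d\geq 5$, the map is expanding, and the hypothesis $\delta < (2e(d-1)/(d-2))^{-(d-2)/(d-4)}$ is exactly the statement $a_0 = -\log\delta > a^\ast$ (using $m - 1 \geq (d-4)/2$). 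Hence $a_\tau$ grows at least like $m^\tau$, i.e.\ $-\log p_\tau \in \Omega(((d-2)/2)^\tau)$.

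Finally I would feed this into the root bound: again by the union bound $\P{\hsigma_i(t)\neq s} \leq \binom{d}{\lceil d/2\rceil} p_{t-1}^{\lceil d/2\rceil}$, so $-\log\P{\hsigma_i(t)\neq s} \geq \lceil d/2\rceil\,(-\log p_{t-1}) - O(1) \in \Omega(((d-2)/2)^t)$, which is the claim. The only steps needing care are the bookkeeping of ties in the majority rule (handled by using the threshold $\lceil d/2\rceil$, which is conservative under any deterministic tie-break) and checking that the stated constant on $\delta$ indeed matches the fixed-point threshold $K^{-1/(m-1)}$. I expect the inductive domination-and-independence step to be the true crux, with everything afterwards being a deterministic contraction estimate.
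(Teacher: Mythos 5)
Your proof is correct and, although it ends at the same scalar recursion and the same Chernoff-type tail analysis as the paper, it resolves the central difficulty---the correlations among a node's neighbors on an undirected tree---by a genuinely different mechanism. The paper's key step (Lemma \ref{lemma:undirected_tree_convergence}) is a \emph{conditional-probability} bound: it fixes the trajectories of both endpoints of an edge together with the pasts of the remaining neighbors, asserts that those neighbors' next actions are then conditionally independent (a dynamic-cavity-style claim whose formal justification is deferred to a footnote pointing at Theorem \ref{thm:cavity_recursion}), and propagates a uniform bound $\delta_t$ on these conditional error probabilities. You instead build, for each directed edge $k\to j$, a \emph{static dominating event} $E^\tau_{k\to j}$ measurable with respect to the initial signals of $G_{k\to j}$ alone, prove the pathwise containment $\{\hsigma_k(\tau)\neq s\}\subseteq E^\tau_{k\to j}$ by deterministic counting (correctly absorbing ties, since a wrong vote forces at least $\lceil d/2\rceil$ wrong neighbors whatever the coin does), and then obtain independence across the children's subtrees exactly and for free, since disjoint subtrees carry independent signals. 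What your route buys is a fully self-contained elementary induction with no conditional-independence assertion left to justify; what it costs is only that you bound the probability of a slightly larger event, which is harmless since both routes reduce to $p_\tau\le\P{{\rm Binomial}(d-1,p_{\tau-1})\ge d/2-1}$. The back end is essentially identical: your union bound $\binom{d-1}{m}p^m\le(e(d-1)p/m)^m$ lands on the same inequality as the paper's multiplicative Chernoff bound, the same fixed-point threshold $\frac{d-2}{d-4}\log\left(2e(d-1)/(d-2)\right)$ for $-\log\delta$, and the same rate $\left(\half(d-2)\right)^t$. The one line you should write out is the claim $K=\binom{d-1}{m}\le(2e(d-1)/(d-2))^{(d-2)/2}$ when $d$ is odd, so that $m=(d-1)/2>(d-2)/2$: the bound $(e(d-1)/m)^m$ is not monotone in $m$, so this does not follow just from $m\ge(d-2)/2$, but the inequality does hold for all $d\ge 5$.
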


Thus, if Conjecture
\ref{conj:bayesian_no_worse_than_majority} holds:
\begin{itemize}
\item We also have doubly exponential convergence for
iterative Bayesian learning on regular trees with $d \geq 5$, implying that for any $\eps > 0$, an error
probability $\eps$ can be achieved in $O(\log \log (1/\eps))$ iterations
under iterative Bayesian learning.
\item Combining with Theorem \ref{thm:cavity_computational_effort} (cf. Corollary \ref{coro:polylog}), we see that
  the computational effort that is polylogarithmic in $(1/ \eps)$ suffices to achieve error
probability $1/\eps$.
\end{itemize}

This compares favorably with the $\textup{quasi-poly}(1/\eps)$ bound
on computational effort that we can derive by combining Conjecture
\ref{conj:bayesian_no_worse_than_majority} and the simple dynamic
program described in Section \ref{sec:simple_dp}.

In table~\ref{table:numerics} we provide numerical evidence on regular
undirected trees in support of our conjecture. Further numerical
results are presented in Appendix \ref{app:numerical_results}.  All
computations are exact, and were performed using the dynamic cavity
equations.  The results are all consistent with our conjecture over
different values of $d$ and $\P{x_i \neq s}$.

\begin{figure*}
\centering
\includegraphics[scale=0.75]{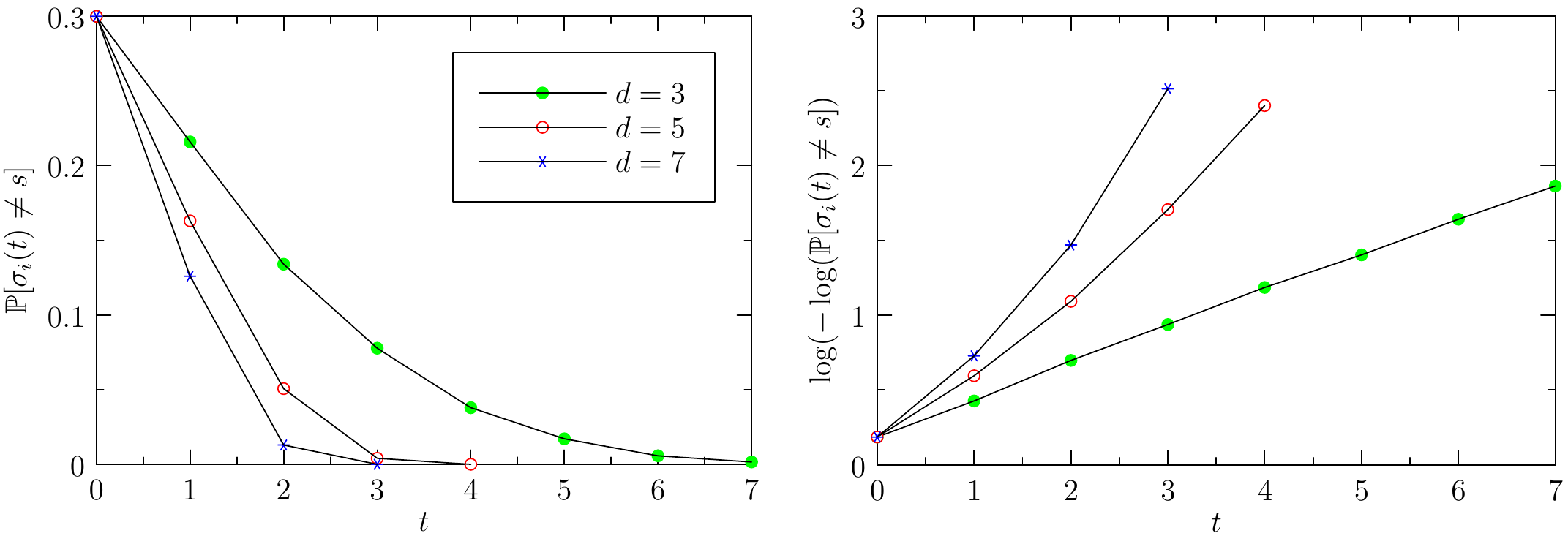}
\caption{Error probability decay on regular trees for iterative
  Bayesian learning, with $\P{x_i\neq s}=0.3$ (cf. Appendix
  \ref{app:numerical_results}). The data used to generate this figure
  is displayed in Table \ref{table:chart_data}.}
\label{fig:error_prob}
\end{figure*}

Figure \ref{fig:error_prob} plots decay of error probabilities in
regular trees for iterative Bayesian learning with $\P{x_i\neq s}=0.3$,
where the agents break ties by picking their original private
signals. Each of the curves (for different values of $d$) in the plot
of $\log (- \log \P{\sigma_i(t)\neq s})$ vs. $t$ appear to be bounded
below by straight lines with positive slope, suggesting doubly
exponential decay of error probabilities with number of iterations.

\vskip5pt
\noindent{\bf Acknowledgments.} We would like to thank Andrea Montanari,
Elchanan Mossel and Allan Sly for valuable discussions.

\bibliographystyle{abbrv} \bibliography{all}

\appendix

\section{Majority dynamics: Proof of Theorem \ref{thm:majority_doublyexpconv}}
\label{app:majority}
In this section we study a very simple update rule, `majority dynamics'.
We use $\hsigma_i(t)$ to denote votes under the majority dynamics.

\begin{definition}
\label{def:majority_dynamics}
Under the majority dynamics, each node $i \in V$ chooses his vote in round $t+1$
according to the majority of the votes of his neighbors in round $t$,
i.e.
\begin{align*}
\hsigma_i(t+1) = \textup{sign} \left ( \sum_{j \in \partial i}  \hsigma_j(t) \right )
\end{align*}
Ties are broken by flipping an unbiased coin.
\end{definition}

As before,
$s \in \{-1, +1\}$ is drawn from a 50-50 prior and nodes receive `private signals'
$\hsigma_i(0)$ that are correct with probability $1- \delta$, and independent conditioned
on $s$.

Consider an undirected $d$ regular
tree.  The analysis in this case is complicated (relative to the case of
a directed tree) by dependencies which have to be carefully handled.

\begin{lemma}
\label{lemma:undirected_tree_convergence}
  Let $i$ and $j$ be adjacent nodes in the tree. Then for all
  $(\hsigma_i^{t-1}, \hsigma_j^{t-1}) \in \{-1,+1\}^{2t}$
  \begin{align}
    \P{\hsigma_i(t)= -1|\hsigma_i^{t-1}, \hsigma_j^{t-1}, s=+1} \leq \delta_t
    \label{eq:errp_bound_t_givenij}
  \end{align}
  where $\delta_t$ is defined recursively by $\delta_0\equiv \delta$,
  and
  \begin{align}
    \label{eq:delta_t}
    \delta_t \equiv \P{{\rm Binomial}(d-1, \delta_{t-1}) \geq d/2-1}
  \end{align}
\end{lemma}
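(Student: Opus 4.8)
The plan is to prove the bound by induction on $t$, rooting the tree at $i$ and exploiting that, once $i$'s trajectory is fixed, the subtrees hanging off $i$'s other neighbours evolve independently. Write $\partial i=\{j,k_1,\dots,k_{d-1}\}$, and for each $m$ let $T_m$ be the subtree containing $k_m$ when the edge $(i,k_m)$ is cut. Under majority dynamics $\hsigma_i(t)=\sgn\big(\hsigma_j(t-1)+\sum_{m=1}^{d-1}\hsigma_{k_m}(t-1)\big)$, and the whole difficulty is that the children's votes $\hsigma_{k_m}(t-1)$ are correlated through $i$'s own past (each $k_m$ observes $i$). The base case $t=0$ is immediate: $\hsigma_i(0)$ is the private signal, wrong with probability exactly $\delta=\delta_0$, and the conditioning histories are empty.

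For the inductive step I would condition on the enriched information
\[
\mathcal H=\big(\hsigma_i^{t-1},\,\hsigma_j^{t-1},\,\{\hsigma_{k_m}^{t-2}\}_{m=1}^{d-1}\big),
\]
adding to the given data the full histories of all children up to time $t-2$. The first key step is that, given $\mathcal H$ and $s=+1$, the votes $\hsigma_{k_1}(t-1),\dots,\hsigma_{k_{d-1}}(t-1)$ are mutually independent, each wrong with probability at most $\delta_{t-1}$. Independence holds because, once $\hsigma_i^{t-2}$ is fixed, the pair $(\hsigma_{k_m}^{t-2},\hsigma_{k_m}(t-1))$ is a function of $i$'s fixed trajectory and the private signals inside $T_m$ alone, the $T_m$ are disjoint, and the remaining data in $\mathcal H$ (namely $\hsigma_j^{t-1}$ and the other children's histories) live in other subtrees. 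The per-child bound is exactly the induction hypothesis applied to the edge $(k_m,i)$, namely $\CondP{\hsigma_{k_m}(t-1)=-1}{\hsigma_{k_m}^{t-2},\hsigma_i^{t-2},s=+1}\le\delta_{t-1}$, and it survives the extra conditioning since that data is independent of $T_m$.

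The second, subtle, step handles the conditioning on $\hsigma_i(t-1)$ itself. Conditioning on $\hsigma_i(t-1)$ given only $\hsigma_i^{t-2}$ would couple the children, since it constrains $\sum_m\hsigma_{k_m}(t-2)$; but inside $\mathcal H$ every $\hsigma_{k_m}^{t-2}$ and $\hsigma_j^{t-2}$ is already fixed, so for $t\ge2$ the value $\hsigma_i(t-1)=\sgn(\hsigma_j(t-2)+\sum_m\hsigma_{k_m}(t-2))$ is a deterministic function of data already in $\mathcal H$ and adds nothing (and at $t=1$ the conditioning variable $\hsigma_i(0)$ is merely $i$'s private signal, independent of the subtrees), so in either case the conditional-independence picture is undisturbed. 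With $\hsigma_j(t-1)$ fixed by $\mathcal H$, the worst case for $i$'s error is $\hsigma_j(t-1)=-1$; writing $W$ for the number of wrong children, $\hsigma_i(t)=-1$ then forces $W\ge d/2-1$ (for even $d$ the tie at $W=d/2-1$ is broken by a fair coin, so the count below over-bounds the true error). Since $W$ is a sum of $d-1$ independent indicators with success probabilities at most $\delta_{t-1}$, stochastic domination gives $\CondP{\hsigma_i(t)=-1}{\mathcal H,s=+1}\le\P{{\rm Binomial}(d-1,\delta_{t-1})\ge d/2-1}=\delta_t$.

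Finally, since $\mathcal H$ refines the pair $(\hsigma_i^{t-1},\hsigma_j^{t-1})$, averaging the uniform bound $\delta_t$ over the conditional law of $\mathcal H$ given $(\hsigma_i^{t-1},\hsigma_j^{t-1},s=+1)$ yields the claimed inequality. I expect the main obstacle to be the second step: articulating cleanly why enriching the conditioning to $\mathcal H$ makes $\hsigma_i(t-1)$ redundant and thereby restores the children's conditional independence. This is precisely the dependency that the symmetric, two-endpoint form of the statement is designed to tame, since it is the two-endpoint hypothesis on the edges $(k_m,i)$ that the induction consumes; the binomial counting and stochastic-domination estimates are then routine.
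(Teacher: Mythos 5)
Your proof is correct and follows essentially the same route as the paper's: the same induction, the same enrichment of the conditioning set by the children's histories truncated one step back, the same conditional-independence claim for the children's latest votes, and the same binomial stochastic-domination bound followed by averaging out the extra conditioning. Your explicit justification of why conditioning on $\hsigma_i(t-1)$ is harmless --- it being a deterministic function of data already in the enriched conditioning set --- is a point the paper only gestures at (via a footnote deferring to the cavity-method reasoning), but it is the same argument.
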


\begin{proof}
  We proceed by induction. Clearly Eq.~\eqref{eq:errp_bound_t_givenij}
  holds for $t=0$.  Suppose Eq.~\eqref{eq:errp_bound_t_givenij} holds
  for some $t$. We want to show
  \begin{align}
    \P{\hsigma_i(t+1)= -1\,|\,\hsigma_i^{t}, \hsigma_j^{t}, s=+1} \leq
    \delta_{t+1} \, ,
    \label{eq:errp_bound_tplus1}
  \end{align}
  for all $(\hsigma_i^{t}, \hsigma_j^{t}) \in \{-1,+1\}^{2(t+1)}$.

  Let $l_1, l_2, \ldots, l_{d-1}$ be the other neighbors of node $i$
  (besides $j$). We will show that, in fact,
  \begin{align}
    &\P{\hsigma_i(t+1)= -1\,|\,\hsigma_i^{t}, \hsigma_j^{t},
      \hsigma_{l_1}^{t-1}, \ldots,
      \hsigma_{l_{d-1}}^{t-1}, s=+1} \ \ \ \ \nonumber \\[3pt]
       &\phantom{\P{\hsigma_i(t+1)= -1\,|\,\hsigma_i^{t}, \hsigma_j^{t}, s=+1\ \ \ \,}}\leq \delta_{t+1} \, ,
    \label{eq:errp_givenxi}
  \end{align}
  for all possible $\xi\equiv (\hsigma_i^{t}, \hsigma_j^{t},
  \hsigma_{l_1}^{t-1}, \hsigma_{l_2}^{t-1}, \ldots,
  \hsigma_{l_{d-1}}^{t-1})$.

  We reason as follows.  Fix the state of the world $s$ and the
  trajectories $\hsigma_i^t$ and $\hsigma_j^t$. Now this induces
  correlations between the trajectories of the neighbors $l_1, \ldots,
  l_{d-1}$, caused by the requirement of consistency, but {\em only up
    to time $t-1$}. If we further fix $\hsigma_{l_m}^{t-1}$, then
  $\hsigma_{l_m}(t)$ (and $\hsigma_{l_m}$ at all future times) is
  conditionally independent of $\big(\hsigma_{l_{m'}}^t \big)_{m' \neq
    m}$. \footnote{A more formal proof can be constructed by mirroring
    the reasoning used in the proof of Theorem
    \ref{thm:cavity_recursion}.} Thus, we have
  \begin{align*}
    &\P{\hsigma_{l_m}(t)= -1|\,\xi, s=+1} = \\
    & \phantom{xxxxxxxxx} \P{\hsigma_{l_m}(t)= -1|\, \hsigma_{l_m}^{t-1}, \hsigma_i^{t-1}, s=+1} \, ,
  \end{align*}
  and therefore, using the induction hypothesis
  \begin{align}
    \P{\hsigma_{l_m}(t)= -1|\,\xi, s=+1} \leq \delta_t
    \label{eq:errp_lm_givenxi}
  \end{align}
  for all $m \in \{1, 2, \ldots,
  d-1\}$.
  Also, the actions $\hsigma_{l_1}(t), \ldots,
  \hsigma_{l_{d-1}}(t)$ are conditionally independent of each other given $\xi,
  s=+1$. We have
  \begin{align*}
    {\hsigma}_i(t+1) = \sgn(\hsigma_j(t) + \hsigma_{l_1}(t) + \ldots + \hsigma_{l_{d-1}}(t))\, ,
  \end{align*}
  with $\sgn(0)$ being assigned value $-1$ or $+1$ with equal probability. We have
  \begin{align*}
    &\P{{\hsigma}_i(t+1) = -1\,|\, \xi, s=+1} \,\leq \\
    &\phantom{xxxxxxxxxx}\P{{\rm Binomial}(d-1, \delta_{t}) \geq d/2-1}\,
  \end{align*}
  from Eq.~\eqref{eq:errp_lm_givenxi} and conditional independence of
  $\hsigma_{l_1}(t), \ldots, \hsigma_{l_{d-1}}(t)$.
    This yields
  Eq.~\eqref{eq:errp_givenxi}. Eq.~\eqref{eq:errp_bound_tplus1}
  follows by summing over $\hsigma_{l_1}^{t-1}, \hsigma_{l_2}^{t-1},
  \ldots, \hsigma_{l_{d-1}}^{t-1}$.
\end{proof}

\begin{proof}[Proof of Theorem \ref{thm:majority_doublyexpconv}]
  By applying the multiplicative version of the Chernoff
  bound\footnote{$\P{X \geq (1+\eta)\E{X}} \leq \left(\frac{\exp
        \eta}{(1+\eta)^{1+\eta}}\right)^{\E{X}}$. We substitute $\E{X}=\delta_t(d-1)$ and $1+\eta=(d/2-1)/[\delta_t(d-1)]$.} to
  Eq.~\eqref{eq:delta_t} we have that
  \begin{align*}
    \delta_{t+1} &\leq e^{(d-2)/2 -(d-1)\delta_t}(2\delta_t (d-1)/(d-2))^{(d-2)/2}
  \end{align*}
  Dropping the term
  $e^{ -(d-1)\delta_t}$, we obtain
  \begin{align}
  \delta_{t+1} \leq (2e\delta_t (d-1)/(d-2))^{\half(d-2)}.
  \end{align}

  This is a first order non-homogeneous linear recursion in $\log
  \delta_t$. If it were an equality it would yield
  \begin{align*}
  &\log \delta_t = \\
  &\ \left(\log \delta + \frac{d-2}{d-4}\log [ 2e (d-1)/(d-2)]\right)
  \left[\half(d-2)\right]^t \\
  &\ \ \ - \, \frac{d-2}{d-4}\log  [ 2e (d-1)/(d-2)] \, ,
  \end{align*}
  and so
  \begin{align}
    \label{eq:delta_asympt}
    - \log \delta_t \in \Omega \left(\left(\half(d-2)\right)^t\right) ,
  \end{align}
  as long as
  $$\log \delta < \frac{d-2}{d-4}\log [ 2e (d-1)/(d-2)]\, .$$
\end{proof}

Theorem \ref{thm:majority_doublyexpconv} is non-trivial for $d \geq 5$.  The upper limit of the noise for which it establishes rapid
convergence approaches $(2e)^{-1}$ as $d$ grows large (see also the discussion below
for large $d$).

\subsection{Convergence for large $d$}
We present now a short informal discussion on the limit $d \rightarrow
\infty$.  We can, in fact,
use Lemma \ref{lemma:directed_tree_convergence} to show
convergence is doubly exponential for $\delta < 1/2 - c/d$ for
some $c< \infty$ that does not depend on $d$.

Here is a sketch of the argument. Suppose $\delta = 1/2 -
c_1/d$. Then, for all $d>d_1$ where $d_1 < \infty$, there exists $c_2
< \infty$ such that $\P{{\rm Binomial}(d-1, \delta) \geq d/2-1} < 1/2
- c_2/\sqrt{d}$. This can be seen, for instance, by coupling with the
${\rm Binomial}(d-1, 1/2)$ process and using an appropriate local
central limit theorem (e.g., see \cite[Theorem 4.4]{KanMon:09}). Thus,
$\delta_1 < 1/2 - c_2/\sqrt{d}$.  Further, $c_2$ can be made
arbitrarily large by choosing large enough $c_1$. Next, with a simple
application of the Azuma's inequality, we arrive at $\delta_2 < c_3$
(where $c_3 \rightarrow 0 $ as $c_2 \rightarrow \infty$). Now, for
small enough $c_3$, we use the Chernoff bound analysis in the proof of
Theorem \ref{thm:majority_doublyexpconv} and obtain doubly exponential
convergence.

\section{Further numerical results}
\label{app:numerical_results}
Table \ref{table:3_pt3},
together with table \ref{table:numerics} above, contrast the error
probabilities of Bayesian updates with those of majority updates.
All cases exhibit lower error probabilities (in the weak sense) for the Bayesian
update, consistent with Conjecture
\ref{conj:bayesian_no_worse_than_majority}. Table \ref{table:chart_data} contains the data plotted in
Figure \ref{fig:error_prob}. Also for these parameters, we found that the Bayesian
updates showed lower error probabilities than the majority updates (though
we omit to present the majority results here).

The running time to generate these tables, on a standard desktop machine
was less than a minute. We did not proceed with more rounds because of
numerical
instability issues which begin to appear as error probabilities decrease.
\newpage
\phantom{xxxxxxxxx}

\begin{table}[t]
\begin{center}
{\small
\begin{tabular}{ l | l | l }
Round  & Bayesian & Majority \\
\hline
0 & $0.15$ & $0.15$ \\
1 & $ 6.1 \cdot 10^{-2}$ & $6.1 \cdot 10^{-2}$\\
2 & $ 1.5 \cdot 10^{-2}$ & $3.0 \cdot 10^{-2}$\\
3 & $ 3.0 \cdot 10^{-3}$ & $1.6 \cdot 10^{-2}$\\
4 & $ 3.4 \cdot 10^{-4}$ & $9.2 \cdot 10^{-3}$\\
5 & $ 2.7 \cdot 10^{-5}$ & $5.5 \cdot 10^{-3}$\\
6 & $ 2.2 \cdot 10^{-6}$ & $3.4 \cdot 10^{-3}$\\
7 & $ 1.4 \cdot 10^{-7}$ & $3.4 \cdot 10^{-3}$
\end{tabular}
}
\end{center}
\vskip-5pt
\caption{\small $d=3$, $\P{x_i \neq s}=0.15$}
\label{table:3_pt3}
\end{table}

\begin{table}[t]
\begin{center}
{\small
\vskip10pt
\begin{tabular}{ l | l | l | l}
Round  & $d=3$ & $d=5$ & $d=7$ \\
\hline
0 & $ 0.30 $ & $ 0.30$ & $0.30$ \\
1 & $ 0.22 $ & $ 0.16$ & $0.13$ \\
2 & $ 0.13 $ & $ 5.1 \cdot 10^{-2}$ & $ 1.3 \cdot 10^{-2}$\\
3 & $ 7.8 \cdot 10^{-2}$ & $ 4.1 \cdot 10^{-3}$ & $ 4.4 \cdot 10^{-6}$\\
4 & $ 3.8 \cdot 10^{-2}$ & $ 1.6 \cdot 10^{-5}$ &\\
5 & $ 1.7 \cdot 10^{-2}$ &\\
6 & $ 5.7 \cdot 10^{-3}$ &\\
7 & $ 1.5 \cdot 10^{-3}$ &
\end{tabular}
}
\end{center}
\vskip-5pt
\caption{\small Error probabilities with $\P{x_i \neq s}=0.3$,
for regular tree of different degrees $d$. This data is
  displayed in Figure \ref{fig:error_prob}.}
\label{table:chart_data}
\end{table}

\end{document}